\newenvironment{asm}%
{\begin{quote}\begin{tabbing}1xxxx \= 2xxxx \= 3xxxx \= 4xxxx \= 5xx \= 6xx \= 7xx \= 8xx \= 9xx \kill}%
{\end{tabbing}\end{quote}}
\newcommand{\as}{\textit{asm\/}}
\title{Concurrent Computing with Shared Replicated Memory\thanks{The research reported in this article was partly supported by the {\bf Austrian Science Fund} for the project \emph{Behavioural Theory and Logics for Distributed Adaptive Systems} (\textbf{FWF: [P26452-N15]}).}}
\author{Klaus-Dieter Schewe\inst{1}, Andreas Prinz\inst{2}, Egon B\"orger\inst{3}}
\institute{
Zhejiang University, UIUC Institute, China,
\email{kdschewe@acm.org}
\and University of Agder, Department of ICT,
Agder, Norway, \email{andreas.prinz@uia.no}
\and Universit\`{a} di
Pisa, Dipartimento di Informatica, Pisa, Italy,
\email{boerger@di.unipi.it}}
\begin{document}

\maketitle

\begin{abstract}

The behavioural theory of concurrent systems states that any concurrent system can be captured by a behaviourally equivalent concurrent Abstract State Machine (cASM). While the theory in general assumes shared locations, it remains valid, if different agents can only interact via messages, i.e. sharing is restricted to mailboxes. There may even be a strict separation between memory managing agents and other agents that can only access the shared memory by sending query and update requests to the memory agents. This article is dedicated to an investigation of replicated data that is maintained by a memory management subsystem, whereas the replication neither appears in the requests nor in the corresponding answers. We show how the behaviour of a concurrent system with such a memory management can be specified using concurrent communicating ASMs. We provide several refinements of a high-level ground model addressing different replication policies and internal messaging between data centres. For all these refinements we analyse their effects on the runs such that decisions
concerning the degree of consistency can be consciously made.

\end{abstract}

\section{Introduction}

Abstract State Machines (ASMs) have been used since their discovery in the 1990s to model sequential, parallel and concurrent systems (see \cite[Ch.~6,~9]{boerger:2003}). For sequential systems the celebrated sequential ASM thesis \cite{gurevich:tocl2000} provides an elegant theoretical underpinning showing that every sequential algorithm as stipulated by three simple, intuitive postulates are captured by sequential ASMs. This was generalised in \cite{ferrarotti:tcs2016} for (synchronous) parallel systems\footnote{The behavioural theory proven by Ferrarotti et al. simplifies the previously developed parallel ASM thesis \cite{blass:tocl2003,blass:tocl2008} by exploiting the idea of multiset comprehension terms in bounded exploration witnesses, which was stimulated by previous research on non-deterministic database transformations \cite{schewe:ac2010,schewe:jucs2010}.}, and in \cite{boerger:ai2016} for asynchronous concurrent systems\footnote{This closed the gap in the behavioural theory of concurrent systems, as the definition of partially ordered runs in \cite{gurevich:lipari1995} was generally considered to be insufficient. There are examples of concurrent systems satisfying the intuitively clear property of sequential consistency \cite{lamport:1990} without having partially ordered runs.}, in which a concurrent system is defined by a family of algorithms assigned to agents that is subject to a concurrency postulate\footnote{Though the proof of the concurrent ASM thesis was first only conducted for families of sequential algorithms, the generalisation to families of parallel algorithms does not cause serious difficulties as sketched in \cite{schewe:acsw2017}.}.

This characterisation can be applied to many different models of concurrent computation (see e.g. \cite{agha:1986,best:1996,genrich:tcs1981,mazurkiewicz:lncs1987,tanenbaum:2007,winskel:1995}). While the thesis has been grounded on the assumption that shared locations are used, it was shown in \cite{boerger:jucs2017} that the theory remains valid, if different agents can only interact via messages, i.e. sharing is restricted to mailboxes. This includes the case of a strict separation between memory managing agents and other agents that can only access the shared memory by sending query and update requests to the memory agents. Naturally, the expectation of an agent $a$ sending a request to a memory management agent $b$ is that the result is the same as if there had been a concurrent run with agent $a$ executing its request directly on the shared locations.

However, as observed in \cite{prinz:abz2014} this expectation can be violated by relaxed shared memory management, in particular in combination with conflicting updates or data replication. Regarding conflicting updates, if agents $a$ and $a^\prime$ try simultaneously to update the value at a location $\ell$ to some new values $v$ and $v^\prime$, respectively, with $v \neq
v^\prime$, a concurrent run would be discarded. Instead of this, a memory management agent $b$ receiving update requests from $a$ and $a^\prime$ might randomly choose one of the possible values $v$ or $v^\prime$, so for one of the agents there would be a lost update and its latest step would not have been executed properly. 

Therefore, in \cite{boerger:ai2016} it is requested that ``if the effects of such techniques are behaviourally relevant for the problem \dots, then this should be described by a precise model through which a programmer can understand, check and justify that the code does what it is supposed
to do''. For the random selection such a precise model can be easily defined by a memory agent $m$ with the following rule\footnote{Here a shared (set-valued) function symbol $\Delta$ denoting an update set is used.}:

\begin{asm}
\> \texttt{FORALL} $\ell$ \texttt{WITH} $\exists v . (\ell,v) \in \Delta $
\texttt{CHOOSE} $v \in \{ v^\prime \mid (\ell,v^\prime) \in \Delta \}$ \texttt{IN} 
           $\ell := v$ 
\end{asm}

If this behaviour is known, one can specify what an agent should do in case of possibly lost updates, e.g. use strict transactions to avoid the situation to arise\footnote{See e.g. \cite{boerger:scp2016} for an ASM specification of concurrent systems with transactional behaviour.}, or integrate algorithms for mutual exclusion \cite{an:mth2016,lynch:1996}, or ensure that an agent only continues after the memory agent $m$ (considered as part of the environment \cite{boerger:2003}) has terminated its step, etc.

In this article we investigate the case of replicated data maintained by a memory management subsystem, where the replication should not appear in the requests nor in the corresponding answers, which is a standard requirement in distributed databases with replication \cite[Chap.13, pp.459ff.]{oezsu:2011}. Consider first an example of undesirable behaviour with four agents $a_1,\ldots,a_4$ having the following respective programs (where ; denotes sequential execution, and $Print(x)$ means to read the value $x$ and to copy it to some output):
\[ x:=1 \mid y:=1 \mid Print(x);Print(y) \mid Print(y); Print(x) \]

Then there is no concurrent run where (1) initially $x=y=0$, (2) each agent makes once each possible move, (3) $a_3$ prints $x=1,y=0$, and (4) $a_4$ prints $x=0,y=1$. However, if $x$ and $y$ are replicated, say that there are always two copies, and an update by the programs $a_1$ or $a_2$ affects only a single copy, such a behaviour will indeed be enabled.

Therefore, our objective is that such behaviour must be understandable from the specification so that the developer of the system can see, whether the consequences of such behaviour can be tolerated or 
additional consistency assurance measures have to be included. We assume a rather simple model, where shared data is logically organised in relations with primary keys, and data can only be accessed by means of the primary key values. We further assume relations to be horizontally fragmented according to values of a hash function on the primary key values, and these fragments are replicated. Replicas are assigned to different nodes, and several nodes together form a data centre, i.e. that are handled by one dedicated data management agent\footnote{This is similar to the data organisation in the noSQL database system Cassandra \cite{rabl:pvldb2012}, but otherwise the Cassandra system is of no importance for this article. We may think of nodes as storage units and of data centres as physical machines managing them.}. 

In addition, values in replicas carry logical timestamps set by the data centres and stored in the records in the nodes of the memory management subsystem. We further adopt Lamport's simple approach for the maintenance of timestamps \cite{lamport:cacm1978}, which basically advances a clock, if its data centre receives a record with a future timestamp. This allows us to formulate and investigate policies that guarantee certain to-be-specified levels of consistency (as known from replication in relational databases). 

For retrieval of a set of records a specified number of replicas has to be read, and for each record always the one with the latest timestamp will be returned. Depending on how many replicas are accessed the returned records may be (in the strict sense) outdated or not. Likewise, for the update of a set of records timestamps will be created, and a specified number of replicas of the records will be stored. Success of retrieval or update will be returned according to specified read- and write-policies.

In Section \ref{sec:ground} we will first specify the behaviour of a concurrent system with shared data requiring that all agents interact with this subsystem for data retrieval and updates using appropriate \texttt{SEND} and \texttt{RECEIVE} actions. The memory management subsystem will be specified
by a separate collection of agents. In Section \ref{sec:consistency} we investigate a refinement concerning policies how many replicas are to be read or updated, respectively. We show that some combinations of replication policies enable {\em view compatibility}, i.e. data consistency, which formalises the expectation above. In Section \ref{sec:data} we refine our specification taking the communication between data centres into account, and address the enforcement of the read and write policies. We obtain a complete, though not necessarily correct refinement, and as a consequence view compatibility cannot be guaranteed anymore. In fact, we even show that view compatibility implies view serialisability. That is, without exploiting the possibility of transactions---at least for single read or write requests---consistency cannot be preserved. Finally, we conclude with a brief summary and outlook.

\section{Ground Model for Shared Memory Management with Replication}\label{sec:ground}

We assume some familiarity with Abstract State Machines (ASMs), which can be understood as a form of pseudo-code with well-founded semantics\footnote{Here we do not repeat the formal definition of the semantics of ASMs---detailed definitions can be found in the textbook~\cite[Sect.2.2/4]{boerger:2003}.}. The {\em signature} $\Sigma$ of an ASM is a finite set of function symbols $f$, each associated with an arity $ar_f$. A {\em state} $S$ is a set of functions $f_S$ of arity $ar_f$ over some fixed base set $B$, given by interpretations of the corresponding function symbol $f$. Each pair $(f,(v_1,\dots,v_{ar_f}))$ comprising a function symbol and arguments $v_i \in B$ is called a {\em location}, and each pair $(\ell,v)$ of a location $\ell$ and a value $v \in B$ is called an {\em update}. A set of updates is called an {\em update set}. The evaluation of terms is defined as usual by 
\[val_S(f(t_1,\dots,t_{ar_f})) = f_S(val_S(t_1),\dots,val_S(t_{ar_f})) . \]

\noindent
We say that $f_S(val_S(t_1),\dots,val_S(t_{ar_f}))$ is the value at location $(f,(val_S(t_1),\dots,val_S(t_{ar_f})))$ in state $S$. ASM {\em rules} $r$ are composed using 

\begin{quote}\begin{description}

\item[\bf assignments.] $f(t_1,\dots,t_{ar_f}) := t_0$ (with terms $t_i$ built over $\Sigma$),

\item[\bf branching.] \texttt{IF} $\varphi$ \texttt{THEN} $r_+$ \texttt{ELSE} $r_-$,

\item[\bf parallel composition.] \texttt{FORALL} $x$ \texttt{WITH} $\varphi(x)$ \ $r(x)$,

\item[\bf bounded parallel composition.] $r_1 \dots r_n$,

\item[\bf choice.] \texttt{CHOOSE} $x$ \texttt{WITH} $\varphi(x)$ \texttt{IN} $r(x)$, and

\item[\bf let.] \texttt{LET} $x=t$ \texttt{IN} $r(x)$.

\end{description}\end{quote}

Each rule yields an update set $\Delta(S)$ in state $S$. If this update set is consistent, i.e. it does not contain two updates $(\ell,v), (\ell,v^\prime)$ with the same location $\ell$ and different values $v \neq v^\prime$, then applying this update set defines a successor state $S + \Delta(S)$.

Regarding the function symbols in $\Sigma$ we further distinguish static, dynamic and derived functions. Only dynamic function symbols can appear as the outermost symbol on the left hand side of an assignment.

\subsection{Concurrent Communicating Abstract State Machines}

A {\em concurrent ASM} (cASM) $\mathcal{CM}$ is defined as a family $\{ (a, \as_a) \}_{a \in \mathcal{A}}$ of pairs consisting of an agent $a$ and an ASM $\as_a$. 

Let $\Sigma_a$ denote the signature of the ASM $\as_a$. Taking the union $\Sigma = \bigcup_{a \in \mathcal{A}} \Sigma_a$ we distinguish between $\mathcal{CM}$-states built over $\Sigma$ and local states for agent $a$ built over $\Sigma_a$; the latter ones are simply projections of the former ones on the subsignature.

\begin{definition}\label{def-run}\rm

A {\em concurrent $\mathcal{CM}$-run} of a concurrent ASM $\{ (a, \as_a) \}_{a \in \mathcal{A}}$ is a sequence $S_0, S_1, S_2, \dots$ of $\mathcal{CM}$-states, such that for each $n \ge 0$ there is a finite set $A_n \subseteq \mathcal{A}$ of agents such that $S_{n+1}$ results from simultaneously applying update sets $\Delta_a(S_{j(a)})$ for all agents $a \in A_n$ that have been built by $\as_a$ in some preceding state $S_{j(a)}$ ($j(a) \leq n$ depending on~$a$), i.e. $S_{n+1} = S_n + \bigcup_{a \in A_n} \Delta_a(S_{j(a)})$ and $a \notin \bigcup_{i=j(a)}^{n-1} A_i$.

\end{definition}

Dynamic functions in $\Sigma_a$ can further be private or shared. In the latter case they can be updated also by other agents and thus appear in at least one other signature $\Sigma_{a^\prime}$.

In order to isolate agents responsible for a memory management subsystem we exploit {\em communicating concurrent ASMs} (ccASM) \cite{boerger:jucs2017}. In a ccASM the only shared function symbols take the form of mailboxes. Sending of a message $m$ from $a$ to $b$ means to update the out-mailbox of $a$ by inserting $m$ into it. This mailbox is a set-valued shared location with the restriction that only the sender can insert messages into it and only the environment---in this case understood as the message processing system---can read and delete them. The message processing system will move the message $m$ to the in-mailbox of the receiver $b$. Receiving a message $m$ by $b$ means in particular that $b$ removes $m$ from its in-mailbox and performs some local operation on $m$. 

Therefore, in ccASMs the language of ASM rules above is enriched by the following constructs\footnote{As explained in~\cite{boerger:jucs2017}, instead of describing the details of the local \texttt{RECEIVE} action of an agent we only use the corresponding  \texttt{RECEIVED}$(m)$ predicate; it means that $m$ is in the mailbox of the agent who (or in general whose mailbox manager) \texttt{RECEIVED} it (when the message processing system has inserted  $m$ into $b$'s in-mailbox and deleted $m$ from $a$'s out-mailbox).}:

\begin{quote}\begin{description}

\item[\bf Send.] \texttt{SEND}($\langle$message$\rangle$, from:$\langle$sender$\rangle$, to:$\langle$receiver$\rangle$),

\item[\bf Receive.] \texttt{RECEIVE}($\langle$message$\rangle$, from:$\langle$sender$\rangle$, to:$\langle$receiver$\rangle$), 

\item[\bf Received.] \texttt{RECEIVED}($\langle$message$\rangle$, from:$\langle$sender$\rangle$, to:$\langle$receiver$\rangle$), and

\item[\bf Consume.] \texttt{CONSUME}($\langle$message$\rangle$, from:$\langle$sender$\rangle$, to:$\langle$receiver$\rangle$).

\end{description}\end{quote}

Let us consider the situation, where all shared data is organised in relations with a unique primary key. We can model this by a set of function symbols $\Sigma_{\text{mem}} = \{ p_1 ,\dots, p_k \}$, where each $p_i$ has a fixed arity $a_i$, and a fixed ``co-arity'' $c_i$, such that in each state $S$ we obtain partial functions $p_i^S : B^{a_i} \rightarrow B^{c_i}$, which are almost everywhere undefined. 

When dealing with a memory management subsystem we have to specify a subsystem with separate agents that maintain the locations $(p_i,\vec{k})$. A read access by an agent $a \in \mathcal{A}$ aims at receiving a subset of relation $p_i$ containing those records with key values satisfying a condition $\varphi$, i.e. the subsystem has to evaluate a term of the form $p_i[\varphi] \;=\; \{ (\vec{k},\vec{v}) \mid \varphi(\vec{k}) \wedge \vec{v} \neq \textit{undef\/} \wedge p_i(\vec{k}) = \vec{v} \}$. As $p_i$ is not in the signature $\Sigma_a$, the agent $a$, instead of using the term $p_i[\varphi]$ in a rule, must send a read-request and wait for a response, i.e. it executes
\begin{asm}
\texttt{SEND}(read($p_i,\varphi$),from:$a$,to:\textit{\textit{home\/}}($a$))
\end{asm}

\noindent 
(the message just contains the name $p_i$ of the function symbol and the formula $\varphi$, and $\textit{home\/}(a)$ denotes a not further specified receiver, which for agent $a$ represents the memory management subsystem), and waits until, once \texttt{RECEIVED}($\text{answer}(p_i,\varphi)$,from:\textit{home\/}($a$),to:$a$) becomes true, it can
\begin{asm}
\texttt{RECEIVE}($\text{answer}(p_i,\varphi)$,from:\textit{home\/}($a$),to:$a$)
\end{asm}

\noindent 
the requested value from the memory management subsystem. We abstract from the details of the communication but assume the communication to be reliable (no message gets lost or damaged). Where clear from the context for reasons of succinctness we notationally omit the sender and receiver parameters in \texttt{SEND} and \texttt{RECEIVE}. 

Naturally, the received answer corresponds to a unique previously sent read-request. As in the sequel we concentrate on the handling of single message by the memory management subsystem, the correct association of possibly several answers to several requests is of minor importance for us.

The answer in the message must be a relation of arity $a_i + c_i$ satisfying the key property above. The agent can store such an answer using a non-shared function $p_i^a$ or process it in any other way, e.g. aggregate the received values. This is part of the ASM rule in $\as_a$, which we do not consider any further.

In the \texttt{SEND} and \texttt{RECEIVE} rules we use a fixed agent \textit{home\/}($a$), with which the agent $a$ communicates. It will be unknown to the agent $a$, whether this agent \textit{home\/}($a$) processes the read-request or whether it communicates with other agents to produce the answer.

Analogously, for bulk write access an agent $a$ may want to execute the operation $p_i \;\mbox{:\&}\; p$ to update all records\footnote{Note that in this way we capture a deletion of a record in $p_i$ with key $\vec{k}$ by having $(\vec{k},\textit{undef\/}) \in p$. Also insertions are subsumed by this operation: if $(\vec{k},\vec{v}) \in p$ holds, but $val_S(p_i,\vec{k}) = \textit{undef\/}$, then a new record in $p_i$ with key $\vec{k}$ is inserted.} with a key defined in $p$ to the new values given by $p$. While this would correspond to the ASM rule \texttt{FORALL} $(\vec{k},\vec{v}) \in p$ \ $p_i(\vec{k}) := \vec{v}$, the agent $a$ must send a write-request and wait for a response, i.e. it executes
\begin{asm}
\texttt{SEND}(write($p_i,p$),to:$\textit{home\/}(a)$)
\end{asm}

\noindent 
(again, the message only contains the name of the function symbol $p_i$ and a relation $p$), and waits to receive an acknowledgement to the write-request\footnote{Naturally, the remark above concerning the association of an answer to a unique previously sent request, extends analogously to write requests.}, i.e. to
\begin{asm}
\texttt{RECEIVE}(acknowledge(write,$p_i$),from:$\textit{\textit{home\/}\/}(a)$).
\end{asm}

We use the notation $\mathcal{CM}_0 = \{ (a, \as^c_a) \}_{a \in \mathcal{A}} \cup \{ (db, \as_{db}) \}$ for the concurrent communicating ASM. Here for the sake of completeness we may think of a single memory agent $db$---in particular, we have $\textit{home\/}(a) = db$ for all $a \in \mathcal{A}$---that receives read and write requests and executes them in one step\footnote{Note that the answer to a read request is a set, which may be empty.}. Thus, the rule of $\as_{db}$ looks as follows:
\begin{asm}
\texttt{IF} \> \texttt{RECEIVED}(read($p_i,\varphi$),from:$a$) \\
\texttt{THEN} \\
\> \texttt{CONSUME}(read($p_i,\varphi$),from:$a$) \\
\> \texttt{LET} answer($p_i,\varphi$) $= \{ (\vec{k},\vec{v}) \mid \varphi(\vec{k}) \wedge p_i(\vec{k}) = \vec{v} \wedge \vec{v} \neq \textit{undef\/} \}$ \texttt{IN} \\
\>\> \texttt{SEND}(answer($p_i,\varphi$),to:$a$) \\
\texttt{IF} \> \texttt{RECEIVED}(write($p_i,p$),from:$a$) \\
\texttt{THEN} \\
\> \texttt{CONSUME}(write($p_i,p$),from:$a$) \\
\> \texttt{FORALL} $(\vec{k},\vec{v}) \in p$ \\
\>\> $p_i(\vec{k}) := \vec{v}$ \\
\> \texttt{SEND}(acknowledge(write,$p_i$),to:$a$)
\end{asm}

\subsection{Memory Organisation with Replication}

In the previous subsection we assumed that logically (from the users' point of view) all data is organised in relations $p_i$ with a unique primary key. However, as we want to emphasise {\em replication}, instead of a location $(p_i,\vec{k})$ there will always be several replicas, and at each replica we may have a different value. We use the notion {\em cluster} to refer to all (replicated) locations associated with a logical relation $p_i$.

Each cluster is associated with several {\em data centres}, and each data centre comprises several {\em nodes}. The nodes are used for data storage, and data centres correspond to physical machines maintaining several such storage locations. So let $\mathcal{D}$ denote the set of data centres, and let $\mathcal{D}_i$ ($i=1,\dots,k$) be the sets of data centres for maintaining the relations $p_i$, i.e. $\mathcal{D} = \bigcup_{i=1}^k \mathcal{D}_i$.

First let us assume that each relation $p_i$ is fragmented according to the values of a hash-key. That is, for each $i = 1,\dots,k$ we can assume a static hash-function $h_i : B^{a_i} \rightarrow [m,M]
\subseteq \mathbb{Z}$ assigning a hash-key to each key value in $B^{a_i}$, i.e. those keys at which in the memory management system possibly some value may be defined. We further assume a partition
$[m,M] = \bigcup_{j=1}^{q_i} \textit{range}_j$ of the interval of hash-key values such that $\textit{range}_{j_1} < \textit{range}_{j_2}$ holds for all $j_1 < j_2$, so each range will
again be an interval. These range intervals are used for the horizontal fragmentation into $q_i$ fragments of the to-be-represented function $p_i$: $\textit{Frag}_{j,i} = \{ \vec{k} \in B^{a_i}\mid h_i(\vec{k}) \in \textit{range}_j \}$.

All these fragments will be replicated and their elements associated with a value (where defined by the memory management system), using a fixed {\em replication factor} $r_i$ for each cluster. That is, each fragment $\textit{Frag}_{j,i}$ will be replicated $r_i$-times for each data centre $d \in \mathcal{D}_i$. A set of all pairs $(\vec{k},\vec{v})$ with key $\vec{k} \in \textit{Frag}_{j,i}$ and an associated value $\vec{v}$ in the memory management system is called a replica of $\textit{Frag}_{j,i}$.

More precisely assume that each data centre $d \in \mathcal{D}_i$ consists of $n_i$ nodes, identified by $d$ and a number $j^\prime \in \{ 1,\dots,n_i \}$. Then we use a predicate $\textit{copy\/}(i,j,d,j^\prime)$ to denote that the node with number $j^\prime$ in the data centre $d \in \mathcal{D}_i$ contains a replica of $\textit{Frag}_{j,i}$. To denote the values in replicas we use dynamic functions $p_{i,j,d,j^\prime}$ of arity $a_i$ and co-arity $c_i + 1$ (functions we call again replicas). That is, instead of the logical function symbol $p_i$ we use function symbols $p_{i,j,d,j^\prime}$ with $j \in \{ 1,\dots,q_i \}$, $d \in \mathcal{D}_i$ and $j^\prime \in \{ 1,\dots,n_i \}$, and we request $h_i(\vec{k}) \in range_j$ for all $\vec{k} \in B^{a_i}$ whenever $copy\/(i,j,d,j^\prime)$ holds and $p_{i,j,d,j^\prime}(\vec{k})$ is defined. For the associated values we have $p_{i,j,d,j^\prime}(\vec{k}) = (\vec{v},t)$, where $t$ is an added timestamp value, and values $\vec{v}$ may differ from replica to replica, i.e. there can be different values $\vec{v}$ with different timestamps in different replicas of $\textit{Frag}_{j,i}$.

Each data centre $d$ maintains a logical clock $clock_d$ that is assumed to advance (without this being further specified), and $clock_d$ evaluates to the current time at data centre $d$. Timestamps must satisfy the following requirements:

\begin{enumerate}

\item Timestamps are totally ordered.

\item Timestamps set by different data centres are different from each other\footnote{This requirement can be fulfilled by assuming that a timestamp created by data centre $d$ has the form $n + o_d$ with a positive integer $n$ and an offset $o_d \in (0,1)$, such that offsets of different data centres are different. Equivalently, one might use integer values for timestamps plus a total order on data centres that is used for comparisons in case two timestamps are otherwise equal.}.

\item Timestamps respect the inherent order of message passing, i.e. when data with a time\-stamp $t$ is created at data centre $d$ and sent to data centre $d^\prime$, then at the time the message is received the clock at $d^\prime$ must show a time larger than $t$. 
\label{condition2}

\end{enumerate}

When the condition \ref{condition2} is not met, a data centre may also adjust its clock for logical time synchronisation according to Lamport's algorithm in \cite{lamport:cacm1978}. For clock adjustment let us define $\text{adjust\_clock}(d,t) \equiv clock_d := t^\prime$, where $t^\prime$ is the smallest possible timestamp at data centre $d$ with $t \le t^\prime$.

\subsection{Internal Request Handling for Replicated Memory}

When dealing with a memory management subsystem with replication the request messages sent by agents $a$ remain the same, but the internal request handling by the memory management subsystem changes. This will define a refined concurrent communicating ASM $\mathcal{CM}_1 = \{ (a,\as^c_a) \}_{a \in \mathcal{A}} \cup  \{ (d,\as_d) \}_{d \in \mathcal{D}}$.

Each agent $a$ possesses a private version of the shared location $(p_i,\vec{k})$ parameterised by $a$ itself\footnote{An exact definition for this is given by the notion of {\em ambient dependent function} in \cite{boerger:jcss2012}.}. Consider a read request $\text{read}(p_i,\varphi)$ received from agent $a$ by data centre $d$---let $i$ be fixed for the rest of this section. Due to the fact that data is horizontally fragmented, we need to evaluate several requests $\text{read}(p_{i,j},\varphi)$ concerning keys $\vec{k}$ with $h_i(\vec{k}) \in \textit{range}_j$, one request for each fragment index $j$, and then build the union so that $p_i[\varphi] = \bigcup_{j=1}^{q_i} p_{i,j}[\varphi]$.

In order to evaluate $p_{i,j}[\varphi]$ several replicas of $\textit{Frag}_{j,i}$ will have to be accessed. Here we will leave out any details on how these replicas will be selected and accessed---this will be handled later by means of refinement. We only request that the selection of replicas complies with a
\textit{read-policy\/}. Such a policy will also be left abstract for the moment and defined later\footnote{We also leave out the treatment of nodes that are not reachable. It can be tacitly assumed that if node $j$ at data centre $d$ cannot be reached, then all operations affecting data at this node will be logged and executed once the node becomes available again.}.

When reading actual data, i.e. evaluating $p_{i,j,d,j^\prime}(\vec{k})$ for selected key values $\vec{k}$, we obtain different time-stamped values $(\vec{v},t)$, out of which a value $\vec{v}$ with the latest
timestamp is selected and sent to $a$ as the up-to-date value of $p_i(\vec{k})$\footnote{When there is no record in relation $p_{i,j,d,j^\prime}$ with key $\vec{k}$, we would normally write $p_{i,j,d,j^\prime}(\vec{k}) = \textit{undef\/}$ in an ASM, but in the replication context it will be simpler to write $p_{i,j,d,j^\prime}(\vec{k}) = (\textit{undef\/},-\infty)$ instead, i.e. all non-existing data are considered to carry the smallest possible timestamp denoted by $-\infty$. Furthermore, if a record is deleted, we keep a deletion timestamp, so we may also find $p_{i,j,d,j^\prime}(\vec{k}) = (\textit{undef\/},t)$ with $t > -\infty$. As we will see, such a deletion timestamp $t$ becomes obsolete, once the value $(\textit{undef\/},t)$ has been assigned to all replicas, i.e. the assumption of deletion timestamps does not disable physical removal of data from the database.}. The requirement that timestamps set by different data centres differ implies that for given $\vec{k}$ the value $\vec{v}$ with the latest timestamp is unique. All records obtained this way will be returned as the result of the read request
to the issuing agent $a$. Thus, we obtain the following ASM rule \texttt{AnswerReadReq} to-be-executed by any data centre $d$ upon receipt of a read request from an agent $a$:

\begin{asm}
\texttt{AnswerReadReq} = \\
\texttt{IF} 
        \texttt{RECEIVED}($\text{read}(p_i,\varphi),\text{from:}a$) \texttt{THEN} \\
\> \texttt{CONSUME}($\text{read}(p_i,\varphi),\text{from:}a$)\\
\> \texttt{FORALL} $j \in \{ 1,\dots,q_i \}$ 
            \texttt{CHOOSE} $G_{i,j}$ \texttt{WITH} \\
\>\;\; $G_{i,j} \subseteq 
      \{ (d^\prime,j^\prime) \mid \textit{copy\/}(i,j,d^\prime,j^\prime) \wedge
      d^\prime \in \mathcal{D}_i \wedge 1 \leq j^\prime \leq n_i\}$\\
\>\> $\wedge \textit{complies\/}(G_{i,j},\text{read-policy})$ \\
\> \texttt{LET} $t_{\max}(\vec{k}) = \max \{ t \mid \exists \vec{v}^\prime, \bar{d}, \bar{j}. 
       (\bar{d},\bar{j}) \in G_{i,j} \wedge
        p_{i,j,\bar{d},\bar{j}}(\vec{k}) = (\vec{v}^\prime,t) \} $ \texttt{IN}\\
\> \texttt{LET} $\text{answer}_{i,j} =  \{ (\vec{k},\vec{v}) \mid 
     \varphi(\vec{k}) \wedge h_i(\vec{k}) \in \textit{range}_j \wedge \vec{v} \neq \textit{undef\/} \wedge$ \\
\>\> $\exists d^\prime, j^\prime . ( (d^\prime,j^\prime) \in G_{i,j} \wedge 
     p_{i,j,d^\prime,j^\prime}(\vec{k}) = (\vec{v},t_{\max}(\vec{k}))) \}$ \texttt{IN}\\
\> \texttt{LET} $\text{answer}(p_i,\varphi) = \bigcup_{j=1}^{q_i} \text{answer}_{i,j}$ \texttt{IN} \\
\>\>         \texttt{SEND}($\text{answer}(p_i,\varphi),\text{to:}a$)
\end{asm}

Note that the unique value $\vec{v}$ with $p_{i,j,d^\prime,j^\prime}(\vec{k}) = (\vec{v},t_{\max}(\vec{k}))$ may be \textit{undef\/} and that the returned $\text{answer}(p_i,\varphi)$ may be the empty set.

For a write request write($p_i,p$) sent by agent $a$ to data centre $d$ we proceed analogously. In all replicas of $\textit{Frag}_{j,i}$ selected by a {\em write-policy} the records with a key value in $p$ will be updated to the new value---this may be \textit{undef\/} to capture deletion---provided by $p$, and a timestamp given by the current time $\textit{clock}_d$. However, the update will not be executed, if the timestamp of the existing record is already newer. In addition, clocks that ``are too late'' will be adjusted, i.e. if the new timestamp received from the managing data centre $d$ is larger than the timestamp at data centre $d^\prime$, the clock at $d^\prime$ is set to the received timestamp. Thus, we obtain the following ASM rule \texttt{PerformWriteReq}\footnote{Again, we dispense with the handling of write-requests at nodes that are not available. For this we can assume an exception handling procedure that logs requests and executes them in the order of logging, once a node has become alive again. This could give rise to a refinement, which we omit here.} to-be-executed by any data centre $d$ upon receipt of an update request from an agent $a$:

\begin{asm}
\texttt{PerformWriteReq} = \\
\texttt{IF} \texttt{RECEIVED}($\text{write}(p_i,p),\text{from:}a$) \texttt{THEN} \\
\;\;\; \texttt{CONSUME}($\text{write}(p_i,p),\text{from:}a$)\\
\;\;\; \texttt{FORALL} $j \in \{ 1,\dots,q_i \}$  
               \texttt{CHOOSE} $G_{i,j}$ \texttt{WITH} \\
      \> $G_{i,j} \subseteq 
      \{ (d^\prime,j^\prime) \mid \textit{copy\/}(i,j,d^\prime,j^\prime) \wedge
      d^\prime \in \mathcal{D}_i \wedge 1 \leq j^\prime \leq n_i\}$\\
      \> $\wedge \textit{complies\/}(G_{i,j},\text{write-policy})$ \\
\;\;\; \texttt{LET} $t_{\text{current}} = \textit{clock}_{\texttt{self}}$ \texttt{IN}\\
\> \texttt{FORALL} $(d^\prime,j^\prime) \in G_{i,j}$ \\
\>\;\;\;\texttt{FORALL} $(\vec{k},\vec{v}) \in p$ 
          \texttt{WITH} $h_i(\vec{k}) \in \textit{range}_j$  \\
\>\> \texttt{IF} $\exists \vec{v}^\prime, t . 
  p_{i,j,d^\prime,j^\prime}(\vec{k}) = (\vec{v}^\prime, t) \wedge t < t_{\text{current}}$ \texttt{THEN}\\
\>\>\;\;\; $p_{i,j,d^\prime,j^\prime}(\vec{k}) := (\vec{v},t_{\text{current}})$\\
\>\;\;\;\texttt{IF}  $clock_{d^\prime} < t_{\text{current}}$
\texttt{THEN} adjust\_clock$(d^\prime,t_{\text{current}})$ \\
\;\;\; \texttt{SEND}(acknowledge($\textit{write\/},p_i),\text{to:}a$)
\end{asm}

The clock adjustment is necessary to ensure that timestamps respect the inherent order of message passing as requested above. Write requests with old timestamps may be lost in case a value with a newer timestamp already exists. Then depending on the read-policy a lost update on a single replica may be enough for the value never to appear in an answer to a read-request. In the following we use the notions of {\em complete} and {\em correct} refinement\footnote{Note that the notion of refinement for ASMs is more general than data refinement as discussed in \cite[p.113]{boerger:2003}. In particular, correct refinement does not imply the preservation of invariants.} as defined in \cite[pp.111ff.]{boerger:2003}.

\begin{proposition}\label{prop-1st-refinement}

The concurrent communicating ASM $\mathcal{CM}_1 = \{ (a,\as^c_a) \}_{a \in \mathcal{A}} \cup  \{ (d,\as_d) \}_{d \in \mathcal{D}}$ is a complete refinement of the concurrent communicating ASM $\mathcal{CM}_0 = \{ (a,\as^c_a) \}_{a \in \mathcal{A}} \cup  \{ (db,\as_{db}) \}$.

\end{proposition}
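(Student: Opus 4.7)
The plan is to show that every run $S_0,S_1,\ldots$ of $\mathcal{CM}_0$ can be mimicked by a run $S_0',S_1',\ldots$ of $\mathcal{CM}_1$ whose projection onto the application agents $a\in\mathcal{A}$ produces the same sequence of \texttt{SEND}/\texttt{RECEIVE} events on the message-boxes. Since the rules $\as^c_a$ are identical in both machines, the only genuine work is to simulate each firing of $\as_{db}$ by one (or more) firings of $\as_d$ for a suitable $d\in\mathcal{D}_i$, while maintaining a correspondence between the logical location $(p_i,\vec{k})$ in $\mathcal{CM}_0$ and its family of replicas $p_{i,j,d',j'}(\vec{k})$ in $\mathcal{CM}_1$.

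First I would fix the correspondence: call a $\mathcal{CM}_1$-state \emph{synchronised} if, for every logical location $(p_i,\vec{k})$ with $h_i(\vec{k})\in\textit{range}_j$, all replicas $p_{i,j,d',j'}(\vec{k})$ with $\textit{copy\/}(i,j,d',j')$ carry the same pair $(\vec{v},t)$. A synchronised state $S_n'$ corresponds to the abstract state $S_n$ iff this common $\vec{v}$ equals $p_i^{S_n}(\vec{k})$ for every $(p_i,\vec{k})$. The initial pair $S_0,S_0'$ is chosen synchronised and corresponding. I then proceed by induction: for each step that in $\mathcal{CM}_0$ comes from some $\as^c_a$, reuse that same step in $\mathcal{CM}_1$; for each step of $\as_{db}$ triggered by a request of $a$, I let $d=\textit{home\/}(a)$ fire \texttt{AnswerReadReq} or \texttt{PerformWriteReq} with the non-deterministic \texttt{CHOOSE} resolved so that every $G_{i,j}$ is the \emph{full} set $\{(d',j')\mid \textit{copy\/}(i,j,d',j')\}$. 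By the synchronisation invariant, the maximum timestamp in \texttt{AnswerReadReq} is attained everywhere and the computed $\text{answer}(p_i,\varphi)$ coincides with that produced by $\as_{db}$; for a write, since $G_{i,j}$ covers all replicas and (by clock advancement together with $\text{adjust\_clock}$) $t_{\text{current}}$ strictly dominates all stored timestamps, every replica is updated to the same new pair $(\vec{v},t_{\text{current}})$, so synchronisation is restored and the acknowledgement matches.

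The main obstacle is twofold. The first difficulty is that \texttt{CHOOSE} ranges only over sets that satisfy $\textit{complies\/}(G_{i,j},\cdot)$; we need to argue that any read- and write-policy worth its name admits the full set of replicas as an admissible choice, so that the simulation above is actually legal. Since the policies are still left abstract at this stage of the paper, this is essentially a monotonicity/maximality assumption on \textit{complies\/} that I would state explicitly at the beginning of the proof. The second and subtler point is the timestamp guard $t<t_{\text{current}}$ in \texttt{PerformWriteReq}: a stale write must not silently be dropped during the simulation. This is resolved by noticing that, because we simulate a single abstract step at a time and advance (or adjust) the clock of the firing data centre so as to exceed all previously assigned timestamps, the guard is automatically satisfied for every replica, and no update is lost. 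Combining these observations gives the required correspondence at every state that the application agents can actually observe, which is exactly the notion of complete refinement from~\cite{boerger:2003}.
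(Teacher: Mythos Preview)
Your argument is correct in spirit and considerably more explicit than the paper's own. The paper simply lists the three structural differences between $\mathcal{CM}_0$ and $\mathcal{CM}_1$---the change of $\textit{home}(a)$ from $db$ to some $d\in\mathcal{D}$, and the replacement of the two rule fragments for read and write handling by their replica-aware counterparts---and then declares that ``each run of the abstract communicating concurrent ASM defines in a natural way a run of the refined communicating concurrent ASM'', noting that this is in fact a $(1,1)$-refinement. In other words, the paper treats completeness as evident from the syntactic correspondence of the rule bodies and does not spell out how the non-deterministic \texttt{CHOOSE} is to be resolved, nor how the timestamp guard is discharged. Your proof fills exactly those gaps: you fix a concrete simulation by maintaining a synchronisation invariant, resolving each \texttt{CHOOSE} to the full replica set, and arguing that clock advancement makes the guard $t<t_{\text{current}}$ vacuously true. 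The price you pay is the explicit monotonicity assumption on \textit{complies} (the full set must be admissible), which the paper leaves implicit but which is satisfied by every concrete policy introduced in Section~\ref{sec:consistency}. So your route is a genuine elaboration of the paper's sketch rather than a different strategy; it buys rigour at the cost of an assumption the paper silently relies on anyway.
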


\begin{proof}

The only differences between the two communicating concurrent ASMs are the following:

\begin{itemize}

\item Instead of having $\textit{home\/}(a) = db$ for all $a \in \mathcal{A}$ in the abstract specification,  $\textit{home\/}(a) \in \mathcal{D}$ differs in the refinement. Nonetheless, in both cases the handling of a read or write request is done in a single step.

\item The rule fragment in the abstract specification

\begin{asm}
\> \texttt{LET} answer($p_i,\varphi$) $= \{ (\vec{k},\vec{v}) \mid \varphi(\vec{k}) \wedge p_i(\vec{k}) = \vec{v} \wedge \vec{v} \neq \textit{undef\/} \}$ \texttt{IN} \\
\>\> \texttt{SEND}(answer($p_i,\varphi$),to:$a$)
\end{asm}

dealing with a read request corresponds to a rule fragment

\begin{asm}
\> \texttt{FORALL} $j \in \{ 1,\dots,q_i \}$ 
            \texttt{CHOOSE} $G_{i,j}$ \texttt{WITH} \\
\>\;\; $G_{i,j} \subseteq 
      \{ (d^\prime,j^\prime) \mid \textit{copy\/}(i,j,d^\prime,j^\prime) \wedge
      d^\prime \in \mathcal{D}_i \wedge 1 \leq j^\prime \leq n_i\}$\\
\>\> $\wedge \textit{complies\/}(G_{i,j},\text{read-policy})$ \\
\> \texttt{LET} $t_{\max}(\vec{k}) = \max \{ t \mid \exists \vec{v}^\prime, \bar{d}, \bar{j}. 
       (\bar{d},\bar{j}) \in G_{i,j} \wedge
        p_{i,j,\bar{d},\bar{j}}(\vec{k}) = (\vec{v}^\prime,t) \} $ \texttt{IN}\\
\> \texttt{LET} $\text{answer}_{i,j} =  \{ (\vec{k},\vec{v}) \mid 
     \varphi(\vec{k}) \wedge h_i(\vec{k}) \in \textit{range}_j \wedge \vec{v} \neq \textit{undef\/} \wedge$ \\
\>\> $\exists d^\prime, j^\prime . ( (d^\prime,j^\prime) \in G_{i,j} \wedge 
     p_{i,j,d^\prime,j^\prime}(\vec{k}) = (\vec{v},t_{\max}(\vec{k}))) \}$ \texttt{IN}\\
\> \texttt{LET} $\text{answer}(p_i,\varphi) = \bigcup_{j=1}^{q_i} \text{answer}_{i,j}$ \texttt{IN} \\
\>\>         \texttt{SEND}($\text{answer}(p_i,\varphi),\text{to:}a$)
\end{asm}

in the rule \texttt{AnswerReadReq} in the refinement.

\item The rule fragment in the abstract specification

\begin{asm}
\> \texttt{FORALL} $(\vec{k},\vec{v}) \in p$ \ $p_i(\vec{k}) := \vec{v}$ \\
\> \texttt{SEND}(acknowledge(write,$p_i$),to:$a$)
\end{asm}

dealing with a write request corresponds to a rule fragment

\begin{asm}
\> \texttt{FORALL} $j \in \{ 1,\dots,q_i \}$  
               \texttt{CHOOSE} $G_{i,j}$ \texttt{WITH} \\
      \> $G_{i,j} \subseteq 
      \{ (d^\prime,j^\prime) \mid \textit{copy\/}(i,j,d^\prime,j^\prime) \wedge
      d^\prime \in \mathcal{D}_i \wedge 1 \leq j^\prime \leq n_i\}$\\
      \> $\wedge \textit{complies\/}(G_{i,j},\text{write-policy})$ \\
\> \texttt{LET} $t_{\text{current}} = \textit{clock}_{\texttt{self}}$ \texttt{IN}\\
\> \texttt{FORALL} $(d^\prime,j^\prime) \in G_{i,j}$ \\
\>\> \texttt{FORALL} $(\vec{k},\vec{v}) \in p$ 
          \texttt{WITH} $h_i(\vec{k}) \in \textit{range}_j$  \\
\>\> \texttt{IF} $\exists \vec{v}^\prime, t . 
  p_{i,j,d^\prime,j^\prime}(\vec{k}) = (\vec{v}^\prime, t) \wedge t < t_{\text{current}}$ \texttt{THEN}\\
\>\>\> $p_{i,j,d^\prime,j^\prime}(\vec{k}) := (\vec{v},t_{\text{current}})$\\
\>\> \texttt{IF}  $clock_{d^\prime} < t_{\text{current}}$
\texttt{THEN} adjust\_clock$(d^\prime,t_{\text{current}})$ \\
\> \texttt{SEND}(acknowledge($\textit{write\/},p_i),\text{to:}a$)
\end{asm}

in the rule \texttt{PerformWriteReq} in the refinement.

\end{itemize}

Thus, each run of the abstract communicating concurrent ASM defines in a natural way a run of the refined communicating concurrent ASM\footnote{Actually, in this case the refinement is a (1,1)-refinement.}. \qed

\end{proof}

Note that without further knowledge about the read- and write-policies it is not possible to prove that the refinement is also correct.

\section{Refinement Using Replication Policies}\label{sec:consistency}

We define {\em view compatibility}, a notion for consistency that formalises the intuitive expectation of the agents sending requests that the answers in case of replication remain the same as without, because replication is merely an internal mechanism of the memory management subsystem to increase availability, which is completely hidden from the requesting agents. We then refine the ASMs for internal request handling by concrete read- and write-policies, and show that for particular combinations of read- and write-policies our abstract specification $\mathcal{CM}_1$ guarantees view compatibility, which further implies that the refinement of $\mathcal{CM}_0$ by $\mathcal{CM}_1$ is correct.

\subsection{View Compatibility}

Informally, view compatibility is to ensure that the system behaves in a way that whenever an agent sends a read- or write-request the result is the same as if the read or write had been executed in a state without replication or timestamps and without any internal processing of the request\footnote{Note that view compatibility is a rather weak consistency requirement, as it only ensures that despite replication up-to-date values are read. However, as we will show later in Section \ref{sec:data}, even this weak concistency requirement requires some form of transaction management.}. However, it may be possible that parallel requests are evaluated in different states.

For a formal definition of this notion we have to relate runs of the concurrent communicating ASM with the memory management subsystem with runs, where in each state a virtual location $(p_i,\vec{k})$ has only one well-defined value $\vec{v}$ instead of computing such a value from the different replicas that may even originate from different states. For this we first introduce the technical notion of a {\em flattening}: we simply reduce the multiple values associated with replicas of a location $\ell$ to a single value.

Formally, consider runs of the concurrent ASM $\mathcal{CM}_1 = \{ (a,\as^c_a) \}_{a \in \mathcal{A}} \cup \{ (d,\as_d) \}_{d \in \mathcal{D}}$, where the ASMs $\as^c_a$ sends read and write requests that are processed by the data centre agents $d \in \mathcal{D}$, which return responses. 

\begin{definition}\rm

If $S_0, S_1, S_2 , \dots$ is a run of $\mathcal{CM}_1 = \{ (a,\as^c_a) \}_{a \in \mathcal{A}} \cup \{ (d,\as_d) \}_{d \in \mathcal{D}}$, then we obtain a {\em flattening} $S_0^\prime, S_1^\prime, S_2^\prime$, \dots\ by replacing in each state all locations $(p_{i,j,d^\prime,j^\prime},\vec{k})$ by a single location $(p_i,\vec{k})$ and letting the value associated with $(p_i,\vec{k})$ in the considered state be one of the values in the set
\[ 
\{ \vec{v} \mid \exists j,d^\prime,j^\prime . \exists t . p_{i,j,d^\prime,j^\prime}(\vec{k}) = (\vec{v},t) \} 
\]

\noindent 
in the considered state. 

\end{definition}

Obviously, a flattening is a sequence of states of the concurrent ASM $\{ (a,\as^c_a) \}_{a \in \mathcal{A}}$, but in most cases it will not be a run. So the question is, under which conditions we can obtain a run. As we stated above that the system shall behave as if there is no replication, we have to ensure that for each agent the following property holds: If the agent sends a write-request that would update the value of a location $\ell$ to $\vec{v}$, the effect would be the same in the flattening. Analogously, if the agent sends a read-request for location $\ell$, the answer it receives would be also the same as if a read is executed in the flattening.

In order to formalise this property we define agent views, which emphasise only the moves of a single agent $a \in \mathcal{A}$, while merging all other moves into activities of the environment as defined in \cite{boerger:2003}. 

Let us first consider the {\em agent view} of a concurrent run $S_0, S_1, \dots$ of our communicating ASMs $\{ (a,\as^c_a) \}_{a \in \mathcal{A}}$. Let $a \in \mathcal{A}$ be an arbitrary agent. Its view of the run is the subsequence of states $S_{a,0}, S_{a,1}, \dots$ in which $a$ makes a move (restricted to the signature of $a$). Thus $S_{a,0}$ is $a$'s initial state, the state $S_j$ (where $j$ depends on $a$) in which $a$  performs its first step in the given run. Given any state $S_k=S_{a,n}$, its successor state in the $a$-view sequence depends on the move $a$ performs in $S_{a,n}$. 

\begin{enumerate}
	
\item If $a$ in $S_{a,n}$ performs a Send step---a write- or read-request to $\textit{\textit{home\/}\/}(a)$---it contributes to the next state $S_{k+1}$ by an update set which includes an update of its out-mailbox, which in turn is assumed to eventually yield an update of the mailbox of $\textit{\textit{home\/}\/}(a)$.  But $S_{k+1}$ is not yet the next $a$-view state, in which $a$ will perform its next move. This move is determined by the following assumption:
	
\emph{Atomic Request/Reply Assumption} for agent/db runs: If in a run an agent performs a Send step to the memory management system, then its next step in the run is the corresponding Receive step, which can be performed once the answer to the query sent by the memory management system has been Received. 
	
By this assumption the next $a$-view state $S_{a,n+1}=S_{l}$ is determined by (what appears to $a$ as) an environment action which enables the Receive step by inserting the reply message into $a$'s mailbox and thereby making the $Received$ predicate true in $S_l$  for some $l>k$.                                                                                                                                                                                         
	
\item  If $a$ in $S_{a,n}=S_k$ performs a Receive or an internal step, then besides the mailbox update to Consume the received message it yields only updates to non-shared locations so that its next $a$-view state is the result of applying these updates together with updates other agents bring in to form $S_{k+1}=S_{a,n+1}$. 
	
\end{enumerate}

Note that by the \emph{Atomic Request/Reply Assumption} any agent can make finitely many internal steps after and only after each Receive step. For the ease of exposition but without loss of generality we synchronize internal steps of an agent with the global steps other agents perform during such an internal computation segment so that the result of internal moves becomes visible in the global run view. 
 
We now define a notion of \emph{flat agent view} of a run $S_0, S_1, \dots$ of the concurrent agent ASM $\{ (a,\as^c_a) \}_{a \in \mathcal{A}} \cup \{ (d,\as_d) \}_{d \in \mathcal{D}}$, including the memory management subsystem.

Take an arbitrary subsequence $S_{j_0}^\prime, S_{j_1}^\prime, \dots$ of an arbitrary flattening $S_0^\prime, S_1^\prime, \dots$ (restricted to the signature of the agent $a$) of $S_0, S_1, \dots$. Then $S_{j_0}^\prime, S_{j_1}^\prime, \dots$ is called a {\em flat view} of agent $a$ of the run $S_0, S_1, \dots$ if the following conditions hold:

\begin{itemize}
	
\item Whenever $a$ performs a request in state $S_k$ there is some $S_{j_i}^\prime$ such that $k=j_i$. If the corresponding reply is received in state $S_m$ for some $m>k$ (so that $a$ makes a Receive move in state $S_m$), then $S_{j_{i+1}}^\prime = S_m$. Furthermore there exists some $n$ with $k < n\leq m$ such that the following holds:

\begin{itemize}
   	
\item If the request is a write-request, then for each location $\ell$ with value $v$ in this request $val_{S_n^\prime}(\ell) = v$ holds, provided there exists an agent reading\footnote{Otherwise the update at location $\ell$ will be lost. For instance, this may happen in case of outdated timestamps.} the value $v$.
   	
\item If the request is a read-request, then for each location $\ell$ with value $v$ in the answer $val_{S_n^\prime}(\ell) = v$ holds.

\end{itemize}

\item Whenever $a$ performs a Receive or an internal move in state $S_k$ there is some $j_i$ such that $S_k=S_{j_i}^\prime$ and $S_{k+1}=S_{j_{i+1}}^\prime$

\end{itemize}

\begin{definition}\rm

We say that $\{ (a,\as^c_a) \}_{a \in \mathcal{A}} \cup \{ (d,\as_d) \}_{d \in \mathcal{D}}$ is {\em view compatible} with the concurrent ASM $\{ (a,\as^c_a) \}_{a \in \mathcal{A}} \cup \{ (db,\as_{db}) \}$ iff for each run $\mathcal{R} = S_0, S_1, S_2 , \dots$ of 
$\{ (a,\as^c_a) \}_{a \in \mathcal{A}} \cup \{ (d,\as_d) \}_{d \in \mathcal{D}}$ there exists a subsequence of a flattening $\mathcal{R}^\prime = S_0^\prime, S_1^\prime, S_2^\prime, \dots$ that is a run of $\{ (a,\as^c_a) \}_{a \in \mathcal{A}} \cup \{ (db,\as_{db}) \}$ such that for each agent $a \in \mathcal{A}$ the agent $a$-view of $\mathcal{R}^\prime$ coincides with a flat view of $\mathcal{R}$ by $a$.

\end{definition}

Note that in this definition of view compatibility we relate agent views with flat agent views, and in both these views we consider the restriction to the signature of the agent $a$. Technically, the mailboxes associated with $a$ belong to this signature, so every \texttt{SEND} and \texttt{RECEIVE} state $S_i$ (for $a$) and its flattening $S_i^\prime$ appears in the views. We can understand the agent view and the flat agent view in such a way, that the environment reads the request (i.e. the message in the out-mailbox) in state $S_{j_i}^\prime$, evaluates it in state $S_n^\prime$, and places the answer into the in-mailbox of $a$ in state $S_{j_{i+1}}^\prime$ (for $j_i < n < j_{i+1}$ as requested in the definition). This can also formally be considered as a move by a single agent $m$ representing the memory management subsystem, which executes the ASM rules that are associated with each request.

The notion of view compatibility is closely related to {\em sequential consistency} as defined by Lamport. In Lamport's work an execution is a set of sequences of operations, one sequence per 
process. An operation is a pair (operation, value), where the operation is either read or write and value is the value read or written. The execution is {\em sequentially consistent} iff there exists an interleaving of the set of sequences into a single sequence of operations that is a legal execution of a single-processor system, meaning that each value read is the most recently written value. (One also needs to specify what value a read that precedes any write can obtain.) Our definition of view compatibility generalises this notion, as we always consider bulk requests and also permit parallel operations.

\subsection{Specification of Replication Policies}

In the specification of ASM rules handling read and write
requests by a fixed data centre $d$ we used sets $G_{i,j} \subseteq C_{i,j}$
with $C_{i,j} = \{ (d^\prime,j^\prime) \mid
\textit{copy\/}(i,j,d^\prime,j^\prime) \}$ ($i,j$ are given by the
request and the fragmentation) as well as an abstract predicate
$\textit{complies\/}(G_{i,j},\text{policy})$. It is allowed to specify
different policies for read and write, but only one policy is used for
all read requests, and only one for all write requests.

Let us now define different such policies and refine the ASMs for the
data centre agents $d$ by means of these definitions. Basically, we
distinguish policies that are global and those that are local, the
latter ones requesting that only replicas in the handling data centre
$d$ are considered for the set $G_{i,j}$. Thus we can use the following
definition:
\[ \textit{local\/} \;\equiv\; \exists d . \forall d^\prime, j^\prime . 
(d^\prime,j^\prime) \in G_{i,j} \Rightarrow d^\prime = d \]

In addition the different policies differ only in the number of
replicas that are to be accessed. Major global policies are \textsc{All}, \textsc{One}, \textsc{Two},
\textsc{Three}, \textsc{Quorum}, and \textsc{Each\_Quorum}, while the
corresponding local policies are \textsc{Local\_One} and
\textsc{Local\_Quorum}.

\begin{description}

\item[\rm\textsc{All}.] As the name indicates, the predicate $\textit{complies\/}(G_{i,j},\textsc{All})$
  can be defined by $G_{i,j} = C_{i,j}$, i.e. all replicas are to be accessed.

\item[\rm\textsc{One}, \textsc{Two}, \textsc{Three}.] Here, at least one, two or
  three replicas are to be accessed, which defines
  $\textit{complies\/}(G_{i,j},\textsc{One})$,
  $\textit{complies\/}(G_{i,j},\textsc{Two})$ and
  $\textit{complies\/}(G_{i,j},\textsc{Three})$ by $|G_{i,j}| \geq 1$, $|G_{i,j}| \geq
  2$, and $|G_{i,j}| \geq 3$, respectively.

\item[\rm\textsc{Local\_One}.] Analogously,
  $\textit{complies\/}(G_j,\textsc{Local\_One})$ is defined by the conjunction
  $\textit{local\/} \;\wedge\; \textsc{ONE}$.

\item[\rm\textsc{Quorum}($q$).] For a value $q$ with $0 < q < 1$
  $\textit{complies\/}(G_{i,j},\textsc{Quorum}(q))$ is defined by $q \cdot
  |C_{i,j}| < |G_{i,j}|$. By default, the value $q = \frac{1}{2}$ is
  used, i.e. a majority of replicas has to be accessed.
  Note that \textsc{All} could be replaced by \textsc{Quorum}($1$).

\item[\rm\textsc{Local\_Quorum}($q$).] Analogously, 
  $\textit{complies\/}(G_{i,j},\textsc{Local\_Quorum}(q))$ can be defined by
  $\textit{local\/} \;\wedge\; \textsc{Quorum}(q)$.

\item[\rm\textsc{Each\_Quorum}($q$).] For this we have to consider each data
  centre separately, for which we need $C_{i,j,\bar{d}} = \{
  (d^\prime,j^\prime) \in C_{i,j} \mid d^\prime = \bar{d} \}$ and
  $G_{i,j,\bar{d}} = G_{i,j} \cap C_{i,j,\bar{d}}$. Then the definition of
  $\textit{complies\/}(G_{i,j},\textsc{Each\_Quorum}(q))$ becomes $\forall
  \bar{d} \in \mathcal{D}_i . q \cdot |C_{i,j,\bar{d}}| <
  |G_{i,j,\bar{d}}|$.

\end{description}

\subsection{Consistency Analysis}

In the following we will analyse in more detail the effects of the replication policies. For this we need {\em appropriate combinations} of read- and write-policies:

\begin{itemize}

\item If the write policy is \textsc{ALL}, then the combination with any read policy is appropriate.

\item If the read policy is \textsc{ALL}, then the combination with any write policy is appropriate.

\item If the write-policy is \textsc{Quorum}$(q)$ or \textsc{Each\_Quorum}$(q)$ and the read-policy is \textsc{Quorum}$(q^\prime)$ or \textsc{Each\_Quorum}$(q^\prime)$ with $q + q^\prime \ge 1$, then the combination is appropriate.

\end{itemize}

\begin{proposition}\label{prop:1}

Let $\mathcal{CM}_1 = \{ (a,\as^c_a) \}_{a \in \mathcal{A}} \cup \{ (d,\as_d) \}_{d \in \mathcal{D}}$ be the concurrent communicating ASM with a memory management subsystem using data centres $d \in \mathcal{D}$ as specified in the previous section. If the combination of the read and write policies is appropriate, then the system is view compatible with the concurrent ASM $\{ (a,\as^c_a) \}_{a \in \mathcal{A}}$.

\end{proposition}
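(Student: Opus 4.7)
The plan is to reduce view compatibility to an intersection property of read and write sets. First I would prove the following \emph{intersection lemma}: under an appropriate combination of policies, for every fragment index $(i,j)$, any read set $G^r_{i,j}$ and any write set $G^w_{i,j}$ (arising in \texttt{AnswerReadReq} and \texttt{PerformWriteReq} respectively) satisfy $G^r_{i,j} \cap G^w_{i,j} \ne \emptyset$. The case \textsc{All} for writes forces $G^w_{i,j} = C_{i,j}$, so any non-empty $G^r_{i,j}$ intersects it; the \textsc{All}-read case is symmetric; for the \textsc{Quorum}$(q)$/\textsc{Each\_Quorum}$(q)$ case with $q+q^\prime \ge 1$ one uses $|G^r_{i,j}| + |G^w_{i,j}| > (q+q^\prime)\cdot |C_{i,j}| \ge |C_{i,j}|$ (and the per-data-centre version for \textsc{Each\_Quorum}), so a pigeonhole argument gives the intersection.

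Next I would fix a run $\mathcal{R} = S_0, S_1, \dots$ of $\mathcal{CM}_1$ and define the canonical flattening $\mathcal{R}^\prime = S_0^\prime, S_1^\prime, \dots$ by assigning to each logical location $(p_i,\vec{k})$ in state $S_n^\prime$ the value $\vec{v}$ such that $p_{i,j,d^\prime,j^\prime}(\vec{k})=(\vec{v},t)$ for some replica with timestamp $t$ maximal over all replicas of the fragment containing $\vec{k}$ in $S_n$; the total order on timestamps together with their uniqueness across data centres makes this value well-defined. The central claim is then: whenever the data centre handling a \texttt{read} request of $a$ computes $t_{\max}(\vec{k})$ over $G^r_{i,j}$ and returns value $\vec{v}$ for $(p_i,\vec{k})$, at the state $S_n^\prime$ just before the reply is sent we have $val_{S_n^\prime}(p_i,\vec{k}) = \vec{v}$. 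For if not, some replica outside $G^r_{i,j}$ would carry a strictly larger timestamp $t^\ast$. That timestamp was installed by some previous \texttt{PerformWriteReq} execution for a write set $G^w_{i,j}$; by the intersection lemma $G^r_{i,j}$ contains a node of $G^w_{i,j}$, and since timestamps on each replica only increase monotonically, that node would still carry timestamp $\ge t^\ast > t_{\max}(\vec{k})$, a contradiction. For writes the corresponding condition is a direct consequence of the write rule, modulo the escape clause ``provided there exists an agent reading the value''.

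Finally I would construct the required subsequence. By the \emph{Atomic Request/Reply Assumption}, each agent alternates between local moves and send/receive pairs around the memory subsystem; I extract from $\mathcal{R}^\prime$ the states in which either (i) some agent $a$ performs a local, \texttt{SEND}, or \texttt{RECEIVE} step, or (ii) some data centre executes \texttt{AnswerReadReq}/\texttt{PerformWriteReq}. The resulting subsequence is a run of $\{(a,\as^c_a)\}_{a \in \mathcal{A}} \cup \{(db,\as_{db})\}$: agent rules are untouched, and the data centre step, when viewed through the flattening, realises exactly the single-step assignment to $(p_i,\vec{k})$ (or the evaluation of $p_i[\varphi]$) that the $db$ rule prescribes, since by the claim above the flattened value at the moment of reply coincides with the value the $\mathcal{CM}_0$ rule would compute. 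For each agent $a$ the $a$-view of this subsequence then matches a flat agent view of $\mathcal{R}$, with the witness state $S_n^\prime$ (between request $S_{j_i}^\prime$ and reply $S_{j_{i+1}}^\prime$) being the state just before the data centre sends its reply.

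The main obstacle is the delicate interaction in Step~3 between \emph{partially completed} writes and reads that occur concurrently with them: when a \texttt{PerformWriteReq} is in progress, some but not all replicas in $G^w_{i,j}$ may already carry the new timestamp, so one must argue carefully that the maximum timestamp observed by the read still originates from a \emph{completed} write visible in the flattening and not from a write in flight whose effect the read ``half-sees''. This is handled by placing the witness state $S_n^\prime$ after the entire \texttt{PerformWriteReq} step of the relevant write has executed (those steps are atomic in $\mathcal{CM}_1$), so that the intersection lemma can be applied against fully-installed write sets; the remaining bookkeeping, including the handling of \textit{undef\/} values and of reads whose result precedes any write, is routine.
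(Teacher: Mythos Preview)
Your proposal is correct and follows essentially the same line as the paper's own proof: both hinge on the intersection property of read and write replica sets under appropriate policies, combined with the atomicity of \texttt{AnswerReadReq} and \texttt{PerformWriteReq} in $\mathcal{CM}_1$ and the monotonicity of timestamps per replica, to conclude that a read always sees the value with the globally largest timestamp.

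The difference is one of granularity rather than strategy. The paper gives a short sketch: it observes that under \textsc{All} all replicas agree, and under the quorum policies a majority of replicas carry the latest value so any compliant read set hits it; from this it directly asserts that the value with the largest timestamp uniquely determines the equivalent concurrent run. You make the same argument but spell out what the paper leaves implicit: you isolate the intersection lemma with the explicit pigeonhole bound $|G^r_{i,j}|+|G^w_{i,j}| > (q+q')\,|C_{i,j}| \ge |C_{i,j}|$, you define the canonical flattening via the maximal-timestamp value, you prove the key claim that the returned value equals the flattened value by a contradiction using monotonicity, and you exhibit the subsequence witnessing view compatibility. Your ``obstacle'' paragraph about half-seen writes is in fact a non-issue in $\mathcal{CM}_1$, as you yourself note: each \texttt{PerformWriteReq} installs its entire write set in a single step, so every timestamp visible to a later read comes from a fully installed write set against which the intersection lemma applies cleanly. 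The paper relies on this atomicity tacitly; you state it.
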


\begin{proof}

We exploit that in our abstract specification of the handling of write- and read-requests all selected replicas are written and read in parallel. Thus, if the write-policy is \textsc{All}, then in each state always the same value is stored in all replicas, so the proposition is obvious for the policy \textsc{All}.

If the write-policy is \textsc{Quorum}$(q)$ or \textsc{Each\_Quorum}$(q)$, then for each
location $\ell$ the multiplicity of replica considered to determine the value with the largest timestamp is at least $\lceil \frac{m+1}{2} \rceil$ with $m$ being the total number of copies. For this it is essential that updates with a smaller timestamp are rejected, if a value with a larger timestamp already exists.

Consequently, each read access with one of the policies 
\textsc{Quorum}$(q^\prime)$, \textsc{Each\_Quorum}$(q^\prime)$ (with $q + q^\prime \ge 1$) or \textsc{All} will read at least once this value and return it, as the value with
the largest timestamp will be used for the result. That is, in every state only the value
with the largest timestamp for each location uniquely
determines the run, which defines the equivalent
concurrent run.\qed

\end{proof}

\begin{corollary}\label{cor-correct}

If $\mathcal{CM}_1 =  \{ (a,\as^c_a) \}_{a \in \mathcal{A}} \cup  \{ (d,\as_d) \}_{d \in \mathcal{D}}$ is view compatible, then it is also a correct refinement of $\mathcal{CM}_0 =  \{ (a,\as^c_a) \}_{a \in \mathcal{A}} \cup  \{ (db,\as_{db}) \}$.

\end{corollary}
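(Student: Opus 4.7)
The plan is to observe that once a correspondence between refined and abstract states is fixed, view compatibility supplies exactly the direction of refinement that Proposition \ref{prop-1st-refinement} does not yet cover, namely that every run of $\mathcal{CM}_1$ is matched by a run of $\mathcal{CM}_0$. Completeness is already in hand, so only this correctness direction remains to be established.

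First I would fix the correspondence between states: a state $S$ of $\mathcal{CM}_1$ corresponds to any of its flattenings, i.e.\ to a state of $\mathcal{CM}_0$ in which each logical location $(p_i,\vec{k})$ carries a single value chosen from the time-stamped values of the replica locations $(p_{i,j,d',j'},\vec{k})$ in $S$. Mailboxes of the requesting agents belong to both signatures and are identified. In the sense of \cite[pp.111ff.]{boerger:2003}, the abstract locations of interest are then the logical locations together with the agent mailboxes, while the refined locations of interest are the replica locations together with the same mailboxes, related to the abstract ones via flattening.

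For the main step I take an arbitrary run $\mathcal{R} = S_0, S_1, \ldots$ of $\mathcal{CM}_1$ and invoke view compatibility. This yields a flattening $S_0', S_1', \ldots$ together with a subsequence $\mathcal{R}'$ that is a run of $\{ (a,\as^c_a) \}_{a \in \mathcal{A}} \cup \{ (db,\as_{db}) \} = \mathcal{CM}_0$ and such that the agent $a$-view of $\mathcal{R}'$ coincides with a flat $a$-view of $\mathcal{R}$ for every $a \in \mathcal{A}$. Under the correspondence fixed above, the refined run $\mathcal{R}$ thus projects to an abstract run $\mathcal{R}'$, which is precisely what correctness of refinement demands.

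The step I expect to be the main obstacle is the bookkeeping that this agent-view coincidence really covers the standard requirement of a correct refinement that corresponding locations of interest carry equivalent values at corresponding steps. The argument is that the externally observable behaviour of an agent is exhausted by its sequence of \texttt{SEND} and \texttt{RECEIVE} moves on its mailboxes, and the coincidence of the two $a$-views forces these moves to occur at corresponding steps with identical request and reply contents. Consequently the data exchanged between the agents and the memory subsystem in $\mathcal{R}$ agree with those in $\mathcal{R}'$, so the values at the abstract locations of interest agree at every step with the values obtained by flattening the replica locations whenever an agent inspects them. With that, the corollary follows from the definition of correct refinement.
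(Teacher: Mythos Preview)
Your proposal is correct and follows essentially the same approach as the paper: take an arbitrary run of $\mathcal{CM}_1$, invoke view compatibility to obtain a flattened subsequence that is a run of $\mathcal{CM}_0$ with coinciding agent views, and observe that this is precisely the correctness direction of refinement. The paper's proof is a terse two-sentence version of this, while you spell out the state correspondence and the bookkeeping about locations of interest more explicitly; one minor remark is that your opening reference to completeness via Proposition~\ref{prop-1st-refinement} is unnecessary here, since the corollary only asserts correctness.
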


\begin{proof}

\ Let $\mathcal{R} = S_0, S_1, \dots$ be a run of $\mathcal{CM}_1$. According to the definition of view compatibility there exists a flattened subrun $\mathcal{R}^\prime = S_{i_0}^\prime, S_{i_1}^\prime, \dots$ that is also a concurrent run of $\mathcal{CM}_0 =  \{ (a,\as^c_a) \}_{a \in \mathcal{A}} \cup  \{ (db,\as_{db}) \}$ such that for each agent $a \in \mathcal{A}$ the projections of $\mathcal{R}$ and $\mathcal{R}^\prime$ coincide.\qed

\end{proof}

A stronger notion of consistency would be {\em global consistency}, for which we even require that any run of the ASM with the memory management behaves in the same way as a concurrent ASM, i.e. different from the view compatibility we even require that parallel read- and write requests can be seen as referring to the same state. Formally, this requires to strengthen the requirements for a flat view such that for parallel request the index $n$ (with $k < n \le m$) is always the same. Then the specification must be refined to handle also all parallel requests in parallel by the data cente agent.

\section{Refinement with Internal Communication}\label{sec:data}

We will now address a refinement of our specification of the memory management subsystem. So far in $\mathcal{CM}_1$ the \textit{home\/} data centre agent $d$ associated with an agent $a$ manages in one step the retrieval of data from or update of data of sufficiently many replicas as specified by the read- and write-policies, respectively. In doing so we abstracted from any internal communication between data centres.
 
However, in reality data centres refer to different physical machines, so the gist of the refinement is to treat the handling of a request as a combination of direct access to local nodes, remote access via messages to the other relevant data centres, and collection and processing of return messages until the requirements for the read- or write-policy are fulfilled. That is, the validation of the policy accompanies the preparation of a response message and is no longer under control of the \textit{home\/} agent.

We first show again that the refinement is a {\em complete} ASM refinement. Then we investigate again view compatibility and show that it is not preserved by the refinement. In fact, view compatibility is linked to the refinement being also {\em correct}, i.e. for each run of the concrete concurrent system we find a corresponding run of the abstract concurrent system. In general, however, this is not the case. We can even show that view compatibility implies view serialisability, which means that consistency (if desired) can only be guaranteed, if transactions (at least for single requests) are used. On the other hand, transactions together with appropriate read- and write-policies trivially imply view compatibility.

\subsection{Request Handling with Communicating Data Centres}
\label{sec:refinement}

In Section \ref{sec:ground} we specified how a data centre $d$
(treated as an agent) handles a request received from an agent $a$
with $\textit{home\/}(a) = d$. We distinguished read
requests read($p_i,\varphi$) subject to a read-policy and write
requests subject to a write-policy. Now, we first specify an abstract rule which manages
external requests, i.e. coming from an agent $a$ and received by any
data centre $d$, where \textit{request\/} is one of these read or
write requests. Essentially an external request is forwarded as
internal request to all other data centres $d^\prime \in \mathcal{D}_i$,
where it is handled and answered locally (see the definition of
\texttt{HandleLocally} below), whereas collecting (in $\text{answer}(p_i,\varphi)$) and
sending the overall answer to the external agent $a$ is delegated to a
new agent $a^\prime$. \texttt{SELF} denotes the data centre agent $d$
which executes the rule.

\begin{asm}
\texttt{DelegateExternalReq} = \\
\;\;\texttt{IF} 
        \texttt{RECEIVED}($\textit{request\/},\text{from:}a$) \texttt{THEN} \\
\;\;\;\;\texttt{LET} \> $t_{\text{current}} = \textit{clock}_{\texttt{SELF}}$ \texttt{IN}  \\
\;\;\;\;\texttt{LET} $a^\prime$ = \textit{new}(Agent)\/ \texttt{IN} \\
\> \texttt{Initialize}($a^\prime$) \\
\> \texttt{HandleLocally}($\textit{request\/},a^\prime,t_{\text{current}} $)\\
\> \texttt{ForwardToOthers}($\textit{request\/},a^\prime,t_{\text{current}} $)\\
\>  \texttt{CONSUME}($\textit{request\/},\text{from:}a$)\\
\texttt{WHERE}\\
\texttt{ForwardToOthers}($\textit{request\/},a^\prime,t_{\text{current}}$)=\\
\>\texttt{FORALL} $d^\prime \in \mathcal{D}_i$ \texttt{WITH} 
          $d^\prime \neq \texttt{SELF}$ \\
\>\> \texttt{SEND}$((\textit{request\/},a^\prime,t_{\text{current}}),\text{to:}d^\prime)$ \\
\texttt{Initialize}($a^\prime$) = \\
  \> $\text{answer}_{a^\prime}:= \emptyset$ \\
  \> \texttt{FORALL} $1 \le j \le q_i$ \\
\>     \texttt{FORALL} $d^\prime \in \mathcal{D}_i$ \\ 
   \>\> $\textit{count\/}_{a^\prime}(j,d^\prime) := 0 $\\
   \>\> $\textit{count\/}_{a^\prime}(j) := 0 $\\
   \>\> \texttt{IF} $request = \text{read}(p_i,\varphi)$ \\
\>\> \texttt{THEN} $\textit{asm}_{a^\prime}:=\texttt{CollectRespondToRead}$\\
   \>\> \texttt{ELSE} $\textit{asm}_{a^\prime}:=\texttt{CollectRespondToWrite}$\\
   \>\> $\textit{requestor}_{a^\prime}:=a$ \\
   \>\> $\textit{mediator}_{a^\prime}:=\texttt{SELF}$\\
   \>\> $\textit{requestType}_{a^\prime}:=\textit{request}$
\end{asm}

To \texttt{Initialize} a delegate $a$ it is equipped with a set $\text{answer}_a$, where to collect the values arriving from the asked data centres and with counters  $\textit{count\/}_a(j,d)$ (for the number of inspected replicas of the $j$-th fragment at data centre $d$) and $\textit{count\/}_a(j)$ (for the number of inspected replicas of the $j$-th fragment). The counters serve to check compliance with policies. The $mediator$ and $requestor$ information serves to retrieve sender and receiver once the delegate could complete the answer to the request it has been created for.

In this way the request handling agent $d$ simply forwards the request
to all other data centre agents and in parallel handles the request
locally for all nodes associated with $d$. The newly created agent (a
`delegate') will take care of collecting all response messages and
preparing the response to the issuing agent $a$. Request handling by
any other data centre $d^\prime$ is simply done locally using the
following rule.

\begin{asm}
\texttt{ManageInternalReq} = \\
\;\;\; \texttt{IF} 
   \texttt{RECEIVED}$((\textit{request\/},a^\prime,t),\text{from:}d)$  \texttt{THEN} \\
\>\texttt{HandleLocally}$(\textit{request\/},a^\prime,t)$\\
\>  \texttt{CONSUME}$((\textit{request\/},a^\prime,t),\text{from:}d)$
\end{asm}

That is we equip each data centre (agent) $d$ (of a cluster $\mathcal{D}_i$ related to $p_i$) with the following ASM program, where the components \texttt{HandleLocally} and the two versions of  \texttt{CollectRespond} are defined below.

\begin{asm}
$\as_d$ = \\
   \>\texttt{DelegateExternalReq} \\
   \> \texttt{ManageInternalReq}
\end{asm}

For local request handling we preserve most of what was
specified in Section \ref{sec:ground} with the difference that
checking the policy is not performed by the data centre agent but by
the delegate of the request; check the predicates
\textit{all\_messages\_received\/} and
\textit{sufficient\/}(policy) below. We use a predicate \textit{alive\/} to
check, whether a node is accessible or not. A possible interpretation
of this liveness concept is that a node is inaccessible if it does not
reply fast enough. For a read request we specify
\texttt{HandleLocally}$(\text{read}(p_i,\varphi),a^\prime,t_{\text{current}})$
as follows.

\begin{asm}
\texttt{HandleLocally}$(\text{read}(p_i,\varphi),a^\prime,t)$=\\
\;\;\; \texttt{LET}  $d^\prime=$ \texttt{SELF} \texttt{IN}\\
\;\;\; \texttt{FORALL} $j \in \{ 1,\dots,q_i \}$ \\
\> \texttt{LET} \> $G_{i,j,d^\prime} = 
    \{ j^\prime \mid \textit{copy\/}(i,j,d^\prime,j^\prime) \wedge 
    \textit{alive\/}(d^\prime,j^\prime) \}$ \texttt{IN}\\
\> \texttt{LET} \> $t_{\max}(\vec{k}) = 
      max(\{ t \mid \exists \vec{v}^\prime, \bar{j} . 
           \bar{j} \in G_{i,j,d^\prime} \wedge
            p_{i,j,d^\prime,\bar{j}}(\vec{k}) = (\vec{v}^\prime,t) \})$ \texttt{IN}\\
\> \texttt{LET} \> $\text{answer}_{i,j,d^\prime} = 
 \{ (\vec{k},\vec{v},t_{\max}(\vec{k})) \mid \varphi(\vec{k}) 
  \wedge h_i(\vec{k}) \in \textit{range}_j \;\wedge$ \\
\>\>\> $\exists j^\prime \in G_{i,j,d^\prime} .
      p_{i,j,d^\prime,j^\prime}(\vec{k}) = (\vec{v},t_{\max}(\vec{k})) \}$ \texttt{IN}\\
\> \texttt{LET} \> $\text{ans}({d^\prime}) = \bigcup_{j=1}^{q_i} \text{answer}_{i,j,d^\prime}$, 
$\vec{x} = (|G_{i,1,d^\prime}| , \dots, |G_{i,q_i,d^\prime}|)$  \texttt{IN}\\ 
\;\;\; \texttt{SEND}(answer$(\text{ans}({d^\prime}),\vec{x}),\text{to:}a^\prime$)
\end{asm}

Here we evaluate the request locally, but as the determined maximal
timestamp may not be globally maximal, it is part of the returned
relation. Also the number of replicas that are alive and thus
contributed to the local result is returned, such that the delegate
$a^\prime$ responsible for collection and final evaluation of the
request can check the satisfaction of the read-policy. For this the
created partial result is not returned to the agent $d$ that issued
this local request, but instead to the delegate (the collection agent).

For the write requests we proceed analogously. For the handling of an
update request the rule
\texttt{HandleLocally}$(\text{write}(p_i,p),a^\prime,t_{\text{current}})$ is
specified as follows.

\begin{asm}
\texttt{HandleLocally}$(\text{write}(p_i,p),a^\prime,t^\prime)$=\\
\;\;\; \texttt{LET}  $d^\prime= $\texttt{SELF} \texttt{IN}\\
\;\;\; \texttt{IF} $clock_{d^\prime} < t^\prime$  \texttt{THEN} adjust\_clock$(d^\prime,t^\prime)$ \\
\;\;\; \texttt{LET} \> $G_{i,d^\prime}(j) = 
   \{ j^\prime \mid \textit{copy\/}(i,j,d^\prime,j^\prime) \wedge 
   \textit{alive\/}(d^\prime,j^\prime) \}$ \texttt{IN}\\
\;\;\; \texttt{FORALL} $j \in \{ 1,\dots,q_i \}$ \\
\> \texttt{FORALL} $j^\prime \in G_{i,d^\prime}(j)$ \\
\> \texttt{FORALL}
  $(\vec{k},\vec{v}) \in p$ \texttt{WITH} $h_i(\vec{k}) \in \textit{range}_j$  \\
\>\;\;\; \texttt{IF} $\exists \vec{v}^\prime, t . p_{i,j,d^\prime,j^\prime}(\vec{k}) = (\vec{v}^\prime, t) \wedge t < t^\prime$ \\
\>\;\;\; \texttt{THEN} 
   $p_{i,j,d^\prime,j^\prime}(\vec{k}) := (\vec{v},t^\prime)$ \\
\;\;\; \texttt{LET} $\vec{x} = (|G_{i,1,d^\prime}| , \dots, |G_{i,q_i,d^\prime}|)$ \texttt{IN}\\ \>\texttt{SEND}$(\text{ack\_write}(p_i,\vec{x}),\text{to:}a^\prime$) \\
\end{asm}

Again, the partial results acknowledging the updates at the nodes
associated with data centre $d^\prime$ are sent to the collecting
agent $a^\prime$, which will verify the compliance with the
write-policy. Note that even a non-successful update at a location $(p_{i,j,d^\prime,j^\prime},\vec{k})$---this may result, if already a value with a larger timestamp exists---will be counted for $|G_{i,j,d^\prime}|$.

For the delegate $a^\prime$ that has been created by $d$ for a
request with the task to collect partial responses and to create the
final response to the agent $a$ issuing the request we need predicates \textit{sufficient\/}(policy) to check, whether the read and write policies are fulfilled, in which case the
response to $a$ is prepared and sent. These predicates are defined for
each $i$ as follows.
\begin{align*}
\textit{sufficient\/}(\textsc{All}) \; &\equiv \; \forall j . ( 1 \le j \le q_i \Rightarrow \textit{count\/}(j) = \gamma_{i,j} ) \\
& \qquad\qquad \text{with} \;\gamma_{i,j} = | \{ (d^\prime,j^\prime) \mid  \textit{copy\/}(i,j,d^\prime,j^\prime) \} | \\
\textit{sufficient\/}(\textsc{One}) \; &\equiv \; \forall j . ( 1 \le j \le q_i \Rightarrow \textit{count\/}(j) \ge 1 ) \\
\textit{sufficient\/}(\textsc{Two}) \; &\equiv \; \forall j . ( 1 \le j \le q_i \Rightarrow \textit{count\/}(j) \ge 2 ) \\
\textit{sufficient\/}(\textsc{Three}) \; &\equiv \; \forall j . ( 1 \le j \le q_i \Rightarrow \textit{count\/}(j) \ge 3 ) \\
\textit{sufficient\/}(\textsc{Quorum}(q)) \; &\equiv \; \forall j . ( 1 \le j \le q_i \Rightarrow \gamma_{i,j} \cdot q < \textit{count\/}(j) ) \\
\textit{sufficient\/}(\textsc{Each\_Quorum}(q)) \; &\equiv \; \forall j . ( 1 \le j \le q_i \Rightarrow \forall d \in \mathcal{D}_i . \delta_{i,j,d} \cdot q < \textit{count\/}(j,d) ) \\
& \qquad\qquad \text{with} \; \delta_{i,j,d} = | \{ j^\prime \mid  \textit{copy\/}(i,j,d,j^\prime) \} | \\
\textit{sufficient\/}(\textsc{Local\_Quorum}(q,d)) \; &\equiv \; \forall j . ( 1 \le j \le q_i \Rightarrow \delta_{i,j,d} \cdot q < \textit{count\/}(j,d) ) \\
\textit{sufficient\/}(\textsc{Local\_One}(d)) \; &\equiv \; \forall j . ( 1 \le j \le q_i \Rightarrow \textit{count\/}(j,d) \ge 1 )
\end{align*}

\noindent
It remains to specify the delegate rules $\texttt{CollectRespondToRead\/}$ and $\texttt{CollectRespondToWrite\/}$,
the programs associated with the agent $a^\prime$, which was created upon
receiving a $request$ from an agent $a$. The
$\texttt{CollectRespond\/}$ action is performed until all required messages have been received and splits into two rules for read and write requests, respectively.

While being alive, the delegate $a^\prime$ collects the messages
it receives from the data centres $d^\prime$ to which the original
$request$ had been forwarded to let them
\texttt{HandleLocally} $(request,a^\prime)$. If the set of collected
answers suffices to respond, the delegate sends an answer to the
original requester and kills itself.\footnote{We leave it to the
  garbage collector to deal with later arriving messages, that is
  messages addressed to a delegate which has been deleted already.}
Thus each of the rules of \texttt{CollectRespond} has a
\texttt{Collect} and a \texttt{Respond} subrule with corresponding
parameter for the type of expected messages.

\begin{asm}
\texttt{CollectRespondToRead}=\\
\;\;\; \texttt{IF} 
   \texttt{RECEIVED}$(answer(\text{ans}({d^\prime}),\vec{x}),\text{from:}d^\prime)$ \\
\> \texttt{THEN} \texttt{Collect}$((\text{ans}({d^\prime}),\vec{x}),\text{from:}d^\prime)$\\
\;\;\; \texttt{IF} $\textit{sufficient\/}(\text{read-policy})$ \\
\> \texttt{THEN} \texttt{Respond}$(requestType_{\texttt{SELF}})$\\
\texttt{WHERE}\\
\texttt{Respond}($\text{read}(p_i,\varphi)$)=\\
\> \texttt{LET} $d$=\textit{mediator}(\texttt{SELF}),\;
      $a$=\textit{requestor}(\texttt{SELF}) \texttt{IN}\\
\> \texttt{LET}   $\text{answer}(p_i,\varphi) = \{ (\vec{k},\vec{v}) \mid 
     \exists t .\; (\vec{k},\vec{v},t) \in \text{answer}_{\texttt{SELF}} \}$ \texttt{IN}\\
\>\> \texttt{SEND}($\text{answer}(p_i,\varphi),\text{from:}d,\text{to:}a)$\\
\> \texttt{DELETE}(\texttt{SELF},\texttt{Agent})\\
\texttt{Collect}$((\text{ans}({d^\prime}),\vec{x}),\text{from}:d^\prime)$=\\
\;\;\; \texttt{FORALL} $\vec{k}$ \texttt{WITH} 
   $\exists \vec{v}, t .\; (\vec{k},\vec{v},t) \in \text{ans}({d^\prime})$ \\
\> \texttt{LET} $(\vec{k},\vec{v},t) \in \text{ans}({d^\prime})$  \texttt{IN}\\
\>\> \texttt{IF}  $\exists \vec{v}^\prime , t^\prime .\; 
                (\vec{k},\vec{v}^\prime,t^\prime) \in \text{answer}(p_i,\varphi)$ \\
\>\> \texttt{THEN} \texttt{LET}  $(\vec{k},\vec{v}^\prime,t^\prime) \in \text{answer}(p_i,\varphi)$  \texttt{IN}\\
\>\>\>\texttt{IF} $t^\prime < t$ \\
\>\>\> \texttt{THEN} \> \texttt{DELETE}($(\vec{k},\vec{v}^\prime,t^\prime), \text{answer}(p_i,\varphi)$)\\
\>\>\>\>   \texttt{INSERT}$((\vec{k},\vec{v},t) , \text{answer}(p_i,\varphi))$\\
\>\>  \texttt{ELSE} \> \texttt{INSERT}$((\vec{k},\vec{v},t) , \text{answer}(p_i,\varphi))$\\
\;\;\; \texttt{LET} $(x_1,\dots,x_{q_i}) = \vec{x}$   \texttt{IN}\\
\> \texttt{FORALL} $j \in \{ 1,\dots,q_i \}$  \\
\>\> $\textit{count\/}(j,d^\prime) := \textit{count\/}(j,d^\prime) + x_j$ \\
\>\>$\textit{count\/}(j) := \textit{count\/}(j) + x_j$ \\
\;\;\;\texttt{CONSUME}$((\text{ans}({d^\prime}),\vec{x}),\text{from:}d^\prime)$
\end{asm}

The analogous collection of messages for write
requests is simpler, as the final response is only an
acknowledgement. 

\begin{asm}
\texttt{CollectRespondToWrite}=\\
\;\;\; \texttt{IF} 
    \texttt{RECEIVED}$(\text{ack\_write}(p_i,\vec{x}),\text{from:}d^\prime)$ \texttt{THEN}  \\
\> \texttt{Collect}$(\text{ack\_write}(p_i,\vec{x}),\text{from:}d^\prime)$ \\
\> \texttt{CONSUME}$(\text{ack\_write}(p_i,\vec{x}),\text{from:}d^\prime)$\\
\;\;\; \texttt{IF} \> $\textit{sufficient\/}(\text{write-policy})$  \texttt{THEN}\\
 \> 
     \texttt{SEND}$(\text{acknowledge}
 (\text{write},p_i),\text{from:}\textit{mediator}(\texttt{SELF}),
       \text{to:}\textit{requestor}(\texttt{SELF}))$ \\
 \> \texttt{DELETE}(\texttt{SELF},\texttt{Agent})\\
\texttt{WHERE}\\
\texttt{Collect}$(\text{ack\_write}(p_i,\vec{x}),\text{from:}d^\prime)$ =\\
\>\texttt{LET} $(x_1,\dots,x_{q_i}) = \vec{x}$  \texttt{IN}\\
\> \>\texttt{FORALL} $j \in \{ 1,\dots,q_i \}$  \\
 \> \>\>$\textit{count\/}(j,d^\prime) := \textit{count\/}(j,d^\prime) + x_j$ \\
 \>\> \>$\textit{count\/}(j) := \textit{count\/}(j) + x_j$ 
\end{asm}

Note that in our specification we do not yet deal with exception handling. We may tacitly assume that eventually all requested answers will be received by the collecting agent.

\subsection{Analysis of the Refinement}

First we investigate the interaction of one agent $a$ with the database. By $\mathcal{CM}_1$ we already denoted the abstract concurrent ASM $\{ (a,\as^c_a) \}_{a \in \mathcal{A}} \cup 
\{ (d,\as_d) \}_{d \in \mathcal{D}}$ from Section \ref{sec:ground}. Now
let $\mathcal{CM}_2$ denote the refined concurrent ASM $\{
(a,\as^c_a) \}_{a \in \mathcal{A}} \cup \{ (d,\as_d^\prime) \}_{d \in
\mathcal{D} \cup \mathcal{E}xt}$ together with the dynamic set $ \mathcal{E}xt$ of delegates from Subsection
\ref{sec:refinement}. Note that the delegates are created on-the-fly by the agents
$d \in \mathcal{D}$, needed for collecting partial responses for
each request and preparing the final responses.

Let $S_0, S_1, \dots$ be a concurrent run of
$\mathcal{CM}_1$, which we first look at from the
perspective of a single agent $a \in \mathcal{A}$. In each of its non purely local
steps $\as^c_a$ produces an update set, which together with update sets
produced by other agents is applied to some state (one of the
$S_j$). Let $S$ (another $S_i$ with $j < i$) denote the resulting
state. Apart from updates to non-shared locations with function
symbols in $\Sigma_a$ the updates brought into the state $S$ are read
and write requests sent to the memory management
subsystem. Then agent $a$ continues in some state $S^\prime = S_{i+x}$
evaluating the responses to the read and write requests it had sent. Furthermore,
there is another state $S^{\prime\prime} = S_{i+y}$ ($y \ge x$), where
all updates issued by write requests of $a$ have been propagated to
all replicas.

The transition from $S$ to $S^\prime$ is achieved by means of a step
of $\as_d$ for $d = \text{\textit{home\/}}(a)$, and the same holds for the
transition from $S^\prime$ to $S^{\prime\prime}$. Therefore, there
exist corresponding states $\bar{S}$, $\bar{S}^\prime$ for
$\mathcal{CM}_2$, in which the unchanged $\as^c_a$ brings in
the same read and write requests and receives the last response,
respectively. In a concurrent run for $\mathcal{CM}_2$ the
transition from $\bar{S}$ to $\bar{S}^\prime$ results from several
steps by the subsystem $\{ (d,\as_d^\prime) \}_{d \in \mathcal{D} \cup
  \mathcal{E}xt}$.

With respect to each of the requests received from $a$ the agent $d =
\text{\textit{home\/}}(a)$ contributes to a state $\bar{S}_1$ with requests for
each agent $d^\prime \in \mathcal{D}$, $d^\prime \neq d$, the creation
and initialisation of a response collecting agent $a^\prime$, and the
local handling of the request at nodes associated with data centre
$d$. Then each agent $d^\prime \in D$ contributes to some state
$\bar{S}_k$ ($k > 1$), in which the partial response to the request
sent to $d^\prime$ is produced. Concurrently the collection agent
$a^\prime$ on receipt of a partial response updates its own
locations---these comprise the read response $\varrho$, the acknowledgement responses and several
counters---and thus contributes to some state $\bar{S}_j$ ($j >
1$). Finally, $a^\prime$ will also produce and send the response to
$a$. This response will be the same as the one in state $S^\prime$, if
the refined run from $\bar{S}$ to $\bar{S}^\prime$ uses the same
selection of copies for each request referring to $p_i$ and each
fragment $\textit{Frag}_{j,i}$.

Furthermore, some partial responses may arrive in states $\bar{S}_j$
with $j>k$ for $\bar{S}^\prime = \bar{S}_k$. These are not processed by $a^\prime$ any more, but by the garbage collector. Thus the state $\bar{S}^{\prime\prime}$ , when all updates have been
propagated, corresponds to state
$S^{\prime\prime}$. Taking these considerations together we have shown
the following lemma.

\begin{lemma}

The concurrent ASM $\mathcal{CM}_{2,a} =  \{ (a,\as^c_a) \}\cup
\{ (d,\as_d^\prime) \}_{d \in \mathcal{D} \cup \mathcal{E}xt}$ is a
complete refinement of $\mathcal{CM}_{1,a} = \{ (a,\as^c_a) \}
\cup \{ (d,\as_d) \}_{d \in \mathcal{D}}$.

\end{lemma}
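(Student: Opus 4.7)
The plan is to take an arbitrary run $S_0, S_1, \ldots$ of $\mathcal{CM}_{1,a}$ and to construct a corresponding run $\bar{S}_0, \bar{S}_1, \ldots$ of $\mathcal{CM}_{2,a}$ together with an index map $\psi$ such that the projection of $\bar{S}_{\psi(i)}$ onto the signature of $\as^c_a$ coincides with the projection of $S_i$ for every $i$. This yields a $(1,m)$-refinement in which each step of $\as_d$ in $\mathcal{CM}_1$ that handles one request is refined by a finite segment of steps in $\mathcal{CM}_2$. The induction base is trivial if we start from matching initial states; the induction step splits into a case for $\as^c_a$-steps and a case for $\as_d$-steps.

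Steps of $\as^c_a$ are transcribed verbatim: since $\Sigma_a$ and $\as^c_a$ are identical in the two systems, the same updates to non-shared locations are produced and the same read or write request messages are placed in $a$'s out-mailbox, whence the message processing system delivers them to $\textit{home\/}(a)=d$ in both runs. The nontrivial case is to replace each single step of $\as_d$ that handles a request $\textit{req}$ from $a$ in $\mathcal{CM}_1$ by a segment of refined steps producing the same response to $a$ and the same cumulative effect on the replica locations. This segment will consist of a \texttt{DelegateExternalReq} step at $d$, which creates a fresh delegate $a^\prime$, initialises its accumulators, performs \texttt{HandleLocally} at $d$, and forwards the request to every $d^\prime \in \mathcal{D}_i\setminus\{d\}$; one \texttt{ManageInternalReq} step at each such $d^\prime$ triggering the corresponding local \texttt{HandleLocally} and returning a partial answer to $a^\prime$; a \texttt{CollectRespondToRead} or \texttt{CollectRespondToWrite} step of $a^\prime$ per incoming partial response; and a final step of $a^\prime$ in which \textit{sufficient\/}(policy) becomes true, \texttt{Respond} is fired, the answer is sent from $d$ to $a$, and the delegate is deleted.

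The key observation making this segment faithfully simulate the abstract step is that the non-deterministic choice of $G_{i,j}$ in the abstract rule can be realised by choosing the sets $G_{i,j,d^\prime}$ of live replicas processed by each local \texttt{HandleLocally} so that their union equals $G_{i,j}$; since $\textit{sufficient\/}$ is exactly the instantiation of $\textit{complies\/}$ to the accumulated counts, whenever the abstract choice satisfies $\textit{complies\/}(G_{i,j},\text{policy})$ the corresponding partition satisfies \textit{sufficient\/} at exactly the moment the last relevant partial response has been collected. Because the local \texttt{HandleLocally} rules for read and write are literal restrictions of the abstract \texttt{AnswerReadReq} and \texttt{PerformWriteReq} to a single data centre, and because \texttt{CollectRespondToRead} merges partial answers by keeping the entry with maximal timestamp per key, the $\text{answer}(p_i,\varphi)$ assembled by $a^\prime$ equals the one computed abstractly; symmetrically, the aggregate of writes performed across the \texttt{HandleLocally} calls (together with the associated \texttt{adjust\_clock} operations) reproduces the effect of \texttt{PerformWriteReq}, and the acknowledgement sent to $a$ is identical.

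The main obstacle is the bookkeeping around the intermediate refined states: shared memory in $\mathcal{CM}_2$ is visibly in a partially updated configuration while the delegate is still collecting, and further partial responses may arrive after $a^\prime$ has already answered and deleted itself. For completeness this is harmless because such late messages touch only the now-dead delegate and are invisible to $a$, but a careful statement requires choosing $\psi(i)$ to be the refined state in which the last write of the simulating segment has been applied and the response message is in $a$'s in-mailbox, and then showing that interleaving with the segments simulating concurrent requests of other agents never disturbs this alignment. Once $\psi$ has been pinned down in this way, the two preceding paragraphs supply the induction step and hence the lemma.
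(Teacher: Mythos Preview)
Your proposal is correct and follows essentially the same approach as the paper: the paper's argument (which is the discussion in the paragraphs immediately preceding the lemma, concluded by ``Taking these considerations together we have shown the following lemma'') likewise takes an abstract run, identifies for each request the states $S$, $S'$, $S''$ seen by $a$, and builds the corresponding refined segment $\bar{S}, \bar{S}_1, \ldots, \bar{S}', \ldots, \bar{S}''$ consisting of the \texttt{DelegateExternalReq} step at $d$, the \texttt{ManageInternalReq}/\texttt{HandleLocally} steps at the other data centres, the \texttt{Collect} steps of the delegate, and its final \texttt{Respond}, while observing that late partial responses are absorbed by the garbage collector. Your write-up is somewhat more explicit (the index map $\psi$, the $(1,m)$ shape, and the inductive case split), but the decomposition and the key observations---that the same selection of replicas can be realised in the refined run and that the collector's timestamp-maximisation reproduces the abstract answer---are exactly those used in the paper.
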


Now consider the actions of all agents $a \in \mathcal{A}$ together. The
arguments above remain valid, except that for the sequence $\bar{S},
\bar{S}_1, \bar{S}_2, \dots, \bar{S}^\prime, \dots ,
\bar{S}^{\prime\prime}$ there are many intermediate states, in which
updates of other agents and other data centres for other
requests are brought in. A partial response for a request by $a$ and
the effects on values in the replicas may in general differ, as
different partial responses may result depending on the order of
request handling. However, there exists a run of
$\mathcal{CM}_2$ such that the projection to
$\mathcal{CM}_{2,a}$ is exactly the sequence $\bar{S},
\bar{S}_1, \bar{S}_2, \dots, \bar{S}^\prime, \dots ,
\bar{S}^{\prime\prime}$ investigated before. This gives us the
following result.

\begin{proposition}\label{prop:3}

$\mathcal{CM}_2 =  \{ (a,\as^c_a) \}_{a \in \mathcal{A}} \cup
 \{ (d,\as_d^\prime) \}_{d \in \mathcal{D} \cup \mathcal{E}xt}$ is a
  complete refinement of $\mathcal{CM}_1 =  \{ (a,\as^c_a) \}_{a \in \mathcal{A}} \cup \{ (d,\as_d) \}_{d \in \mathcal{D}}$.

\end{proposition}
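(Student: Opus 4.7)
The plan is to lift the preceding single-agent lemma to the full system by constructing, for each run of $\mathcal{CM}_1$, an interleaved run of $\mathcal{CM}_2$ whose per-agent projections coincide with the ones already supplied by the lemma. First I would take a concurrent run $S_0, S_1, \dots$ of $\mathcal{CM}_1$ and, for every agent $a \in \mathcal{A}$ in isolation, invoke the lemma to obtain a refined sub-sequence $\bar S, \bar S_1, \dots, \bar S^\prime, \dots, \bar S^{\prime\prime}$ that expands each relevant step of $\as_d$ (with $d = \textit{home\/}(a)$) into the finer sequence in which $d$ spawns a fresh delegate $a^\prime$ via \texttt{DelegateExternalReq}, forwards the request to the other $d^\prime \in \mathcal{D}_i$, each target data centre performs \texttt{HandleLocally}, and $a^\prime$ iterates \texttt{CollectRespondToRead} or \texttt{CollectRespondToWrite} until the read- or write-policy is met and the response is sent back to $a$.

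The second step is to schedule these per-agent expansions into one global sequence. Because a concurrent run in the sense of Definition~\ref{def-run} only requires that in each state a finite subset of agents fire with update sets built in some preceding state, I can order the atomic moves of the data centre agents, their dynamically created delegates, and the unchanged $\as^c_a$ components in any way consistent with the causality already present in the per-agent refinements. Crucially, updates contributed by distinct \texttt{HandleLocally} invocations act on the disjoint replica locations $(p_{i,j,d^\prime,j^\prime}, \vec k)$, clock-adjustments touch pairwise different $\textit{clock}_{d^\prime}$, and messages occupy sender- and receiver-indexed mailboxes, so any admissible interleaving yields a consistent global update set.

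The main obstacle is ensuring that the final answer each delegate $a^\prime$ returns to $a$ really equals the answer $a$ received in the original $\mathcal{CM}_1$-run, since with concurrent requests the partial responses collected by $a^\prime$ depend on which writes of other agents have already propagated to the accessed replicas. I would discharge this by letting the set $G_{i,j}$ chosen non-deterministically in each $\mathcal{CM}_1$-step dictate both the selection of replicas accessed by the corresponding \texttt{HandleLocally} calls in the refined run and the subset of partial responses at which the $\textit{sufficient\/}$ predicate is first evaluated to true; partial responses arriving after $a^\prime$ has already answered are harmlessly absorbed by the garbage collector as envisaged in Subsection~\ref{sec:refinement}. Verifying that the constructed global sequence meets Definition~\ref{def-run} and that its projection onto each $\as^c_a$ is identical to the one delivered by the lemma then completes the argument, yielding the $(1,n)$-style complete refinement claimed by the proposition.
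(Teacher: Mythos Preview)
Your proposal follows essentially the same route as the paper: first invoke the single-agent lemma, then lift it to the full system by interleaving the per-agent refinements into one global $\mathcal{CM}_2$-run whose projection onto each $\mathcal{CM}_{2,a}$ reproduces the sequence supplied by the lemma. The paper's argument is extremely terse at this step (it is literally the single sentence ``there exists a run of $\mathcal{CM}_2$ such that the projection to $\mathcal{CM}_{2,a}$ is exactly the sequence \dots\ investigated before''), whereas you spell out the interleaving, the role of the delegates, and the handling of late partial responses in more detail. Two small points to tighten: in $\mathcal{CM}_2$ the rule \texttt{HandleLocally} does not itself select replicas---it always reads or writes \emph{all} alive local replicas---so the non-determinism you want to exploit lives solely in the scheduling of the partial-response messages and the moment at which \textit{sufficient\/} first becomes true; and the claim that distinct \texttt{HandleLocally} invocations touch disjoint replica locations is not literally true (concurrent writes from different agents may target the same $(p_{i,j,d^\prime,j^\prime},\vec{k})$), though the timestamp guard in the rule and your freedom to choose the interleaving still make the constructed run consistent.
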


\subsection{Consistency Analysis}

While we have obtained a complete refinement, $\mathcal{CM}_2$ permits more possibilities to perform updates to replicas and reading values from replicas, respectively, in different orders. Unfortunately, view compatibility (as in Proposition \ref{prop:1}) cannot be preserved, as the following simple counterexample shows:

Consider a single location $(p_i,\vec{k}))$ for a fixed $\vec{k}$. For simplicity forget about the relation symbols $p_i$ and let this location simply be $x$. Assume that there are at least two replicas (say $x_1$ and $x_2$), which both are set to the initial value $0$ (with some timestamp $t_0$). Let the write policy be \textsc{All} and the read policy be \textsc{One} (which according to Proposition \ref{prop:1} is an appropriate combination of policies that leads to view compatibility for the abstract specification $\mathcal{CM}_1$). 

Let $a_1$ be an agent issuing a write request \texttt{SEND}(write($x$,1)), and let $a_2$ be another agent issuing a sequence of two read requests \texttt{SEND}(read($x$)). Then in $\mathcal{CM}_2$ the following sequence is possible:

\begin{enumerate}

\item The write request by $a_1$ gives rise to the update of replica $x_1$, so the value is set to $(1,t)$ with some timestamp $t > t_0$.

\item For the first read request by $a_2$ only the replica $x_1$ is evaluated, and thus the returned answer is $1$.

\item For the following second read request by $a_2$ only the replica $x_2$ is evaluated, and thus the returned answer is $0$.

\item All other updates required for the completion of the write request by $a_1$ happen after the completion of the two read requests.

\end{enumerate}

Obviously, there cannot be a run of the communicating concurrent ASM without replication that produces the same answers to the read and write requests.

From the counterexample we can even conclude that there is no reasonable weaker definition of consistency such that an appropriate combination of read- and write-policies alone, even under strong assumptions that no bulk requests are considered, suffices to guarantee consistency. For such a definition the run in the counterexample would have to be called ``consistent'', which does not make much sense.

The question is whether additional conditions could be enforced in the specification that would lead again to view compatibility. As we will see, any such condition already implies serialisability. Therefore, we conclude our analysis by showing this relationship between view compatibility and (view) serialisability, which implies that the best and well explored way to ensure consistency is to exploit transactions (at least for single read and write requests).

Let us first define the notion of serialisability. For this consider an arbitrary run $\mathcal{R} = S_0, S_1, \dots$ of $\mathcal{CM}_2$. A {\em request} is either a read request of the form $(\text{read}(p_i,\varphi),\text{from:}a)$ or a write request of the form $(\text{write}(p_i,p),\text{from:}a)$. If $S_j$ is the first state, in which \texttt{RECEIVED}($r$) holds for such a request $r$, we call $S_j$ the state, in which the request $r$ is issued, and write $\sigma(r) = j$ as well as $ag(r) = a$. Analogously, a {\em response} $r$ takes either the form $(\text{answer}(p_i,\varphi),\text{to:}a)$ or $(\text{acknowledge}(\text{write},p_i),\text{to:}a)$. We call the state $S_k$, in which \texttt{SEND}($r$) becomes effective, the state, in which the response $r$ is issued, and write also $\sigma(r) = j$ as well as $ag(r) = a$. Then $\sigma$ defines a partial order $\preceq$ on the set of requests and responses in the run as well as an equivalence relation $\sim$ for {\em simultaneous} requests and responses. Furthermore, there is a bijection \textit{ans\/} that maps each request to its corresponding response.

In a serial run a request is immediately followed by its corresponding response without any other request or response in between, but requests may be handled in parallel. Thus, formally we call the run $\mathcal{R}$ {\em serial} iff the following two conditions hold:

\begin{enumerate}

\item If $r$ is a request and $r^\prime$ is another request or response with $\sigma(r) \le \sigma(r^\prime) \le \sigma(\textit{ans\/}(r))$, then either $r \sim r^\prime$ or $r^\prime \sim \textit{ans\/}(r)$ hold.

\item For two simultaneous requests $r \sim r^\prime$ we also have $\textit{ans\/}(r) \sim \textit{ans\/}(r^\prime)$.

\end{enumerate}

\begin{definition}\rm

Two runs $\mathcal{R}$ and $\mathcal{R}^\prime$ are called {\em view equivalent} iff the following two conditions hold:

\begin{enumerate}

\item $\mathcal{R}$ and $\mathcal{R}^\prime$ contain exactly the same requests and responses. In particular, in a response $(\text{answer}(p_i,\varphi),\text{to:}a)$ to a read request $(\text{read}(p_i,\varphi),\text{from:}a)$ the sets $\text{answer}(p_i,\varphi)$ of tuples are identical in both runs.

\item For each agent $a$ the sequence of its requests and responses is identical in both runs, i.e. if $\sigma(r) < \sigma(r^\prime) \wedge ag(r) = ag(r^\prime) = a$ holds in $\mathcal{R}$, it also holds in $\mathcal{R}^\prime$.

\end{enumerate}

Then a run $\mathcal{R}$ is called {\em view serialisable} iff there exists a serial run $\mathcal{R}^\prime$ that is view equivalent to $\mathcal{R}$. 

\end{definition}

Informally phrased, a run is view serialisable if there exists a serial run, in which for each agent the requests and corresponding responses appear in the same order and the same results are produced.

\begin{proposition}\label{prop:serialisability1}

\ If $\mathcal{CM}_2$ is view compatible with the concurrent ASM $\mathcal{CM}_0 = \{ (a,\as^c_a) \}_{a \in \mathcal{A}} \cup \{ (db,\as_{db}) \}$, then every run $\mathcal{R}$ of $\mathcal{CM}_2$ is view serialisable.

\end{proposition}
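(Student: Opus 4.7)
The plan is to use the witness for view compatibility directly as the witness for view serialisability. Given a run $\mathcal{R} = S_0, S_1, \dots$ of $\mathcal{CM}_2$, view compatibility supplies a subsequence $\mathcal{R}^\prime$ of a flattening of $\mathcal{R}$ that is a concurrent run of $\mathcal{CM}_0 = \{(a,\as^c_a)\}_{a \in \mathcal{A}} \cup \{(db,\as_{db})\}$, and such that for every agent $a$ the $a$-view of $\mathcal{R}^\prime$ coincides with a flat $a$-view of $\mathcal{R}$. My claim is that this very $\mathcal{R}^\prime$ is a serial run that is view equivalent to $\mathcal{R}$, which then witnesses view serialisability of $\mathcal{R}$.

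Seriality of $\mathcal{R}^\prime$ is forced by the atomicity built into $\as_{db}$: a single firing of its rule simultaneously performs \texttt{CONSUME} on every pending request and issues the corresponding response via \texttt{SEND} in the same transition. Thus if a request $r$ first appears as \texttt{RECEIVED} at position $j$ of $\mathcal{R}^\prime$, so that $\sigma(r) = j$, then $\textit{ans\/}(r)$ is issued in the $db$-step from $j$ to $j+1$, so $\sigma(\textit{ans\/}(r)) = j+1$. Any request or response $r^\prime$ with $\sigma(r) \le \sigma(r^\prime) \le \sigma(\textit{ans\/}(r))$ then satisfies $\sigma(r^\prime) \in \{j, j+1\}$, so either $r^\prime \sim r$ or $r^\prime \sim \textit{ans\/}(r)$, establishing the first clause of seriality. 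For the second clause, if $r \sim r^\prime$ then both are pending at position $j$ and both are consumed in the same $db$-step, so their responses are issued at position $j+1$ and $\textit{ans\/}(r) \sim \textit{ans\/}(r^\prime)$. View equivalence of $\mathcal{R}$ and $\mathcal{R}^\prime$ then falls out of the coincidence of the agent views: the signature of any agent $a$ contains in particular its in- and out-mailbox, so the $a$-view determines exactly the sequence of requests issued by $a$ and responses received by $a$, together with the associated answer sets $\text{answer}(p_i,\varphi)$; taking the union over all agents gives both clauses of view equivalence.

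The main obstacle I anticipate lies in the bookkeeping between two different indexings: the same request or response appears in both $\mathcal{R}$ and $\mathcal{R}^\prime$ but in general at different $\sigma$-values, and one must be careful that seriality is defined with respect to positions in $\mathcal{R}^\prime$ alone, with the relation back to $\mathcal{R}$ mediated only through the agent views. A secondary subtlety is to confirm that the outermost \texttt{IF RECEIVED} pattern of $\as_{db}$ indeed permits simultaneous handling of all pending requests in a single $db$-step, as the informal reading \emph{executes them in one step} suggests; otherwise two simultaneously issued requests could be assigned to distinct $db$-steps in $\mathcal{R}^\prime$ and the second clause of seriality would need an additional rearrangement argument. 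Both resolutions should be routine, since view compatibility already provides sufficient structural alignment between $\mathcal{R}$ and $\mathcal{R}^\prime$ to transfer the seriality and equivalence conditions cleanly.
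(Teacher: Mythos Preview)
Your plan matches the paper's in spirit, but it skips one step that the paper treats as the heart of the argument. The definitions of \emph{serial run} and \emph{view serialisable} are introduced explicitly for runs of $\mathcal{CM}_2$; the witnessing serial run must therefore itself be a run of $\mathcal{CM}_2$. Your candidate $\mathcal{R}^\prime$ is a run of $\mathcal{CM}_0$: its states carry no replicas and no timestamps, and its memory transitions are moves of the single agent $db$, not of data-centre and delegate agents. So $\mathcal{R}^\prime$ is not, as it stands, an admissible witness.

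The paper fixes exactly this by lifting $\mathcal{R}^\prime$ back to a run $\bar{\mathcal{R}}$ of $\mathcal{CM}_2$. From each transition update set $\Delta_\ell^\prime$ of $\mathcal{R}^\prime$ it forms
\[
\Delta_\ell \;=\; \bigl\{\, ((p_{i,j,d,j^\prime},\vec{k}),(\vec{v},t_\ell)) \;\bigm|\; ((p_i,\vec{k}),\vec{v}) \in \Delta_\ell^\prime \;\wedge\; \textit{copy\/}(i,j,d,j^\prime) \;\wedge\; h_i(\vec{k}) \in \textit{range\/}_j \,\bigr\}
\]
with a strictly increasing sequence of timestamps $t_0 < t_1 < \dots$, and sets $\bar{S}_0 = S_0$, $\bar{S}_{\ell+1} = \bar{S}_\ell + \Delta_\ell$. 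In $\bar{\mathcal{R}}$ every write request updates all replicas in a single step and every read is answered in a single step, so $\bar{\mathcal{R}}$ is serial by construction; view equivalence of $\mathcal{R}$ and $\bar{\mathcal{R}}$ then follows from the coincidence of agent views, just as you argue. Your seriality and view-equivalence reasoning transplants cleanly to $\bar{\mathcal{R}}$; the missing ingredient is only this explicit replica-and-timestamp lift that produces a $\mathcal{CM}_2$-level witness. Note also that once you work with $\bar{\mathcal{R}}$, your secondary worry about whether $\as_{db}$ really consumes all pending requests in one firing evaporates: the single-step handling is imposed directly by the construction of the $\Delta_\ell$.
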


\begin{proof}

Let $\mathcal{R} = S_0, S_1, \dots$ be a run of $\mathcal{CM}_2$. The definition of view compatibility implies that there exists a subsequence of a flattening $\mathcal{R}^\prime = S_0^\prime, S_1^\prime, S_2^\prime, \dots$ that is a run of $\{ (a,\as^c_a) \}_{a \in \mathcal{A}} \cup \{ (db,\as_{db}) \}$ such that for each agent $a \in \mathcal{A}$ the agent $a$-view of $\mathcal{R}^\prime$ coincides with a flat view of $\mathcal{R}$ by $a$. Let $\Delta_\ell^\prime$ be the update set defined by $S_\ell^\prime + \Delta_\ell^\prime = S_{\ell+1}^\prime$. Define
\[ \Delta_\ell = \{ ((p_{i,j,d,j^\prime},\vec{k}),(\vec{v},t_\ell)) \mid ((p_i,\vec{k}),\vec{v}) \in \Delta_\ell^\prime \wedge \textit{copy\/}(i,j,d,j^\prime) \wedge h_i(\vec{k}) \in \textit{range\/}_j \} \]

using timestamps $t_0 < t_1 < t_2 < \dots$. This defines a run $\bar{\mathcal{R}} = \bar{S}_0, \bar{S}_1, \dots$ of $\mathcal{CM}_2$ with $\bar{S}_0 = S_0$ and $\bar{S}_{\ell + 1} = \bar{S}_\ell + \Delta_\ell$. In this run the answer to a read request is executed in a single step and all updates requested by a write request are executed in a single step. Thus $\mathcal{R}^\prime$ is serial.

Furthermore, as for each agent $a \in \mathcal{A}$ the agent $a$-view of $\mathcal{R}^\prime$ coincides with a flat view of $\mathcal{R}$ by $a$, the runs runs $\mathcal{R}$ and $\bar{\mathcal{R}}$ contain exactly the same requests and responses, and for each agent $a$ the sequence of its requests and responses is identical in both runs. That is, $\mathcal{R}$ and $\bar{\mathcal{R}}$ are view equivalent.\qed

\end{proof}

We can finally also show the inverse of the implication in Proposition \ref{prop:serialisability1}, provided that an appropriate combination of read- and write-policies is used.

\begin{proposition}\label{prop:serialisability2}

If all runs of $\mathcal{CM}_2$ are view serialisable and an appropriate combination of a read and a write policy is used, then $\mathcal{CM}_2$ is also view compatible with the concurrent ASM $\mathcal{CM}_0 = \{ (a,\as^c_a) \}_{a \in \mathcal{A}} \cup \{ (db,\as_{db}) \}$.

\end{proposition}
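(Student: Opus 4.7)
The plan is to start from an arbitrary run $\mathcal{R} = S_0, S_1, \dots$ of $\mathcal{CM}_2$ and, using view serialisability, first replace it by a structurally simpler serial run before extracting a flattening that is a run of $\mathcal{CM}_0$ matching the agent views. By hypothesis there exists a serial run $\bar{\mathcal{R}} = \bar{S}_0, \bar{S}_1, \dots$ of $\mathcal{CM}_2$ that is view equivalent to $\mathcal{R}$; in particular the two runs have the same requests and responses per agent, in the same order, and produce identical $\text{answer}(p_i,\varphi)$ sets for every read. Since view compatibility is formulated in terms of agent views and \texttt{SEND}/\texttt{RECEIVE} sequences per agent, it will be enough to produce, for $\bar{\mathcal{R}}$, a flattening subsequence that is a run of $\mathcal{CM}_0$ and agrees with the flat views of the agents; the serial structure of $\bar{\mathcal{R}}$ is precisely what makes this extraction feasible.

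Next I would exploit seriality: between the state $\bar{S}_{\sigma(r)}$ in which a request $r$ is issued and the state $\bar{S}_{\sigma(\text{ans}(r))}$ in which its response is issued, no other request or response of any agent intervenes (up to $\sim$-equivalence). Hence the internal replica-handling triggered by $r$ (delegate creation, forwarding, partial responses, counter updates) is finished, as far as its policy-compliant part is concerned, strictly before the next externally visible event of any agent. This places us essentially in the setting of $\mathcal{CM}_1$ with atomic request handling, so the argument of Proposition~\ref{prop:1} applies to the appropriate combination of read- and write-policies: for each location $(p_i,\vec{k})$ the replicas consulted at a read necessarily include one that holds the value most recently written (in the serial order) to that location, and that value carries the largest timestamp among the consulted replicas.

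I would then build the flattening $\bar{S}_0', \bar{S}_1', \dots$ of $\bar{\mathcal{R}}$ by collapsing each cluster of replicas $(p_{i,j,d',j'},\vec{k})$ to a single location $(p_i,\vec{k})$ whose value is taken to be the one with the largest timestamp present in that state, and I would pick as the subsequence the indices $j_0 < j_1 < \dots$ corresponding to the externally visible events: the states where requests and responses are issued, plus one intermediate state per request-response pair at which the virtual $db$-rule "fires". The previous paragraph guarantees that in the intermediate state the read answer set coincides with $\{(\vec{k},\vec{v}) \mid \varphi(\vec{k}) \wedge \text{val}(p_i,\vec{k}) = \vec{v} \wedge \vec{v} \neq \textit{undef\/}\}$, and that a write request updates the flattened value of each affected location to the value specified in the request. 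This is exactly the input/output behaviour of $\as_{db}$, so the chosen subsequence is a run of $\mathcal{CM}_0$; moreover by construction each $a$-view of this run coincides with a flat view of $\mathcal{R}$ by~$a$, because view equivalence preserves the per-agent sequences of requests, responses and received values.

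The main obstacle I expect is the precise alignment of indices: the flattening must assign to each request exactly one intermediate state $S_n^\prime$ satisfying $\sigma(r) < n \le \sigma(\text{ans}(r))$ in which the virtual single-step evaluation is consistent simultaneously with (i) the answer actually returned in $\bar{\mathcal{R}}$ and (ii) the effect of every previously committed write in the serial order. Seriality reduces this to choosing, within each request-response interval, a state after all policy-compliant replica updates for that request have taken place but before the next request is issued; the appropriate read/write combination then guarantees that the flattened value at that moment is the one actually read or written. A secondary technical point is handling simultaneous requests ($r \sim r'$): the definition of serial run gives $\text{ans}(r) \sim \text{ans}(r')$, so I would collapse such a cluster into a single step of the virtual $db$-agent, which is consistent with the way $\as_{db}$ handles parallel requests in $\mathcal{CM}_0$. \qed
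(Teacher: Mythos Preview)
Your proposal is correct and follows essentially the same approach as the paper: pass to a view-equivalent serial run, take as the subsequence the states in which requests and responses are issued, and let seriality together with the appropriate policy combination ensure that the resulting flattening is a run of $\mathcal{CM}_0$ whose $a$-views coincide with flat $a$-views of the original run. The paper's proof is considerably terser---it simply declares that the flattening is ``defined by the answers to the write requests'' and that the view condition ``follows immediately'' from seriality---whereas you spell out the index alignment, the role of Proposition~\ref{prop:1}, and the handling of simultaneous requests; but the underlying argument is the same.
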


\begin{proof}

\ Let $\mathcal{R}^\prime$ be a run of $\mathcal{CM}_2$ and let $\mathcal{R}^\prime = S_0, S_1, \dots$ be a view equivalent serial run. It suffices to show that that there exists a subsequence of a flattening $\mathcal{R}^\prime = S_0^\prime, S_1^\prime, S_2^\prime, \dots$ that is a run of $\{ (a,\as^c_a) \}_{a \in \mathcal{A}} \cup \{ (db,\as_{db}) \}$ such that for each agent $a \in \mathcal{A}$ the agent $a$-view of $\mathcal{R}^\prime$ coincides with a flat view of $\mathcal{R}$ by $a$.

For this we only have to consider states, in which a request or a response is issued to obtain the desired subsequence, and the flattening is defined by the answers to the write requests. As $\mathcal{R}$ is serial, the condition that for each agent $a$ the $a$-view concides with a flat $a$-view follows immediately.\qed

\end{proof}

\section{Conclusions}\label{sec:schluss}

In this paper we demonstrated the maturity of concurrent communicating ASMs (ccASMs) \cite{boerger:ai2016,boerger:jucs2017} showing how they can be used to specify and analyse concurrent computing systems in connection with shared replicated memory. Using ccASMs we first specified a ground model, in which all access to replicas is handled synchronously in parallel by a single agent.

We then refined our ground model addressing the internal communication in
the memory management subsystem. This refinement significantly changes
the way requests are handled, as the replicas are not selected a
priori in a way that complies with the read- or write-policy, but
instead the acknowledgement and return of a response depends on these
policies. This adds an additional level of flexibility to the internal
request handling. 

We used the specification to analyse consistency and showed that consistency, formalised by the notion of view compatibility, cannot be preserved by the refinement. To the contrary, we could show that even such a rather weak notion of consistency can only be obtained by adopting transactions for at least single requests. 

The refinements could be taken further to capture more and more details of the physical data organisation. For instance, we did not yet tackle the means for handling inactive nodes and for recovery. Nonetheless, while our refined specification is still rather abstract, it shows the way how concurrent systems interact with replicative memory management subsystems, and permits analysis of
which consistency level can be obtained.

It seems straightforward to combine a transactional concurrent system as for example in
\cite{boerger:scp2016} with the specification of a
replicative storage system as done in this paper. In particular, it is
well known that replication strategies can be easily combined with
transaction management. This would then enable stronger consistency results. 

\bibliographystyle{abbrv}
\bibliography{casm}

\end{document}